\title[Optimal LRC via elliptic curves]{Optimal locally repairable codes via elliptic curves}
\author{Xudong Li}\address{Lab of Security Insurance Cyberspace  and School of Science, Xihua University, Chengdu, China
225002}\email{lixudong73@163.com}
\author{Liming Ma}\address{School of Mathematical Sciences, Yangzhou University, Yangzhou, China
225002}\email{lmma@yzu.edu.cn}
\author{Chaoping Xing} \address{Division of Mathematical Sciences, School of Physical Mathematical Sciences,
Nanyang Technological University, Singapore
637371}\email{xingcp@ntu.edu.sg}
\date{}
\newtheorem{lemma}{Lemma}[section]
\newtheorem{theorem}[lemma]{Theorem}
\newtheorem{prop}[lemma]{Proposition}
\newtheorem{ex}[lemma]{Example}
\newtheorem{defn}{Definition}
\theoremstyle{remark}
\newtheorem{rmk}{Remark}
\renewcommand{\epsilon}{\varepsilon}
\renewcommand{\le}{\leqslant}
\renewcommand{\ge}{\geqslant}
\def\Gal{{\rm Gal}}
\def\PP{\mathbb{P}}
\def\F{\mathbb{F}}
\def \fE {\mathfrak{E}}
\def \mL {\mathcal{L}}
\def \mP {\mathcal{P}}
\def \Xi {{X^{[i]}}}
\newcommand{\Ga}{\alpha}
\newcommand{\Gb}{\beta}
\newcommand{\Gg}{\gamma}
\newcommand{\Gs}{\sigma}
\def \bc {{\bf c}}
\def \bo {{\bf 0}}
\def\supp {{\rm supp }}
\def\mG{{\mathcal G}}
\def\Aut {{\rm Aut }}
\def\LRC {{\rm locally repairable code\ }}
\def\Gal{{\rm Gal}}
\begin{document}

\maketitle

\begin{abstract} Constructing locally repairable codes achieving Singleton-type bound (we call them optimal codes in this paper) is a challenging task and has attracted great attention in the last few years. Tamo and Barg \cite{TB14} first gave a breakthrough result in this topic by cleverly considering subcodes of Reed-Solomon codes. Thus, $q$-ary optimal locally repairable codes from subcodes of Reed-Solomon codes given in \cite{TB14} have length upper bounded by $q$. Recently, it was shown through extension of construction in \cite{TB14}   that  length of $q$-ary optimal locally repairable codes can be $q+1$ in \cite{JMX17}. Surprisingly  it was shown in \cite{BHHMV16} that, unlike classical MDS codes, $q$-ary optimal locally repairable codes could have length bigger than $q+1$. Thus, it becomes an interesting and challenging problem to construct $q$-ary optimal locally repairable codes of length bigger than $q+1$.

In the present paper, we make use of rich algebraic structures of elliptic curves to construct a family of $q$-ary optimal locally repairable codes of length up to $q+2\sqrt{q}$. 
It turns out that
locality of our codes can be as big as $23$ and distance can be linear in length.
\end{abstract}

\section{Introduction}\label{sec:1}
Because of recent applications to distributed storage systems, people have introduced a new class of block codes, i.e, locally repairable codes and they have attracted great attention of researchers \cite{HL07,GHSY12,PKLK12,SRKV13,FY14,PD14,TB14,TPD16,BTV17}. A local repairable code is just a  block code with an additional parameter called {\it locality}.  For a locally repairable code $C$ of length $n$ with $k$ information symbols and locality $r$ (see the definition of locally repairable codes in Section \ref{subsec:2.1}), it was proved in \cite{GHSY12} that the minimum distance $d(C)$ of $C$ is upper bounded by
 \begin{equation}\label{eq:x1}
 d(C)\le n-k-\left\lceil \frac kr\right\rceil+2.
 \end{equation}
 The bound \eqref{eq:x1} is called the Singleton-type bound for locally repairable codes and was proved by extending the arguments in the proof of the classical Singleton bound on codes.  In this paper, we refer an optimal locally repairable code to a block code achieving the bound \eqref{eq:x1}.

\subsection{Known results}
Construction of optimal  locally repairable codes, i.e., block codes archiving the bound \eqref{eq:x1} is of both theoretical interest and practical importance. This is a challenging task and has attracted great attention in the last few years. In literature, there are a few constructions available and some classes of  optimal  locally repairable codes are known. A class of codes constructed earlier and known as pyramid codes \cite{HCL} are shown to be codes that
are optimal.  In \cite{SRKV13},  Silberstein {\it et al}  proposed a two-level construction based on the  Gabidulin codes combined with a single parity-check $(r+1,r)$ code. Another construction \cite{TPD16} used two layers of MDS codes, a Reed-Solomon code and a special $(r+1,r)$ MDS code. A common shortcoming of these constructions relates to the size of the code alphabet which in all the papers is an exponential function of the code length, complicating the implementation. There  was an earlier construction of optimal locally repairable codes given in \cite{PKLK12} with  alphabet  size comparable to code length. However, the construction in \cite{PKLK12} only produces  a specific value of the length $n$, i.e., $n=\left\lceil \frac kr\right\rceil(r+1)$. Thus, the rate of the code is very close to $1$. There are also some existence results given in \cite{PKLK12} and \cite{TB14} with less restriction on locality $r$. But both results require large alphabet which is an exponential function of the code length.

 A recent breakthrough construction was given  in \cite{TB14}. This construction naturally generalizes Reed-Solomon construction which relies on the alphabet of cardinality comparable to the code length $n$. The idea behind the construction is very nice. The only shortcoming of this construction is restriction on locality $r$. Namely,  $r+1$ must be a divisor of either $q-1$ or $q$, or $r+1$ is equal to a product of a divisor of $q-1$ and a divisor of $q$ for certain $q$, where $q$ is the code alphabet. This construction was extended via automorphism group of rational function fields by Jin, Ma and Xing \cite{JMX17} and it turns out that there are more flexibility on locality and the code length can be $q+1$.

 Based on the classical MDS conjecture, one should wonder if $q$-ary optimal  locally repairable codes can have length bigger than $q+1$. Surprisingly, it was shown in \cite{BHHMV16} that there exist $q$-ary optimal locally repairable codes of length exceeding  $q+1$. Although only few $q$-ary optimal locally repairable codes with length bigger than $q+1$ were produced in \cite{BHHMV16}, it paves a road for people to continue search for such codes.

\subsection{Our main results}
In this paper, we make use of rich algebraic structures of elliptic curves over finite field to construct $q$-ary optimal locally repairable codes with length bigger than $q+1$. More precisely speaking, we take a subgroup $\mG$ of the automorphism group $\Aut(\fE/\F_q)$ of an elliptic curve $\fE$ over a finite field $\F_q$, then consider the subfield $F$ of the elliptic function field $\F_q(\fE)$ whose elements are fixed by $\mG$. By carefully choosing functions from both $F$ and $\F_q(\fE)$ and mixing them together, we can define a subcode of an algebraic geometry code by taking these function as evaluation functions. It can be shown that this subcode is an optimal locally repairable code. As an elliptic curve over $\F_q$ has more than $q+1$ points and the length of the code is almost equal to the number of points, thus we obtain a $q$-ary optimal locally repairable code of length bigger than $q+1$. Our main result of this paper can be summarized below.

\begin{theorem}\label{thm:1.1} Let $q=p^a$ for a prime $p$ and an even number $a>0$. If $p=3$ or $p\equiv  2\pmod{3}$, then there exists an optimal $q$-ary $[n=3\ell, k=2t+1,d=n-3t]$ locally repairable code with locality $2$ for any $0\le t<\ell\le \left\lfloor\frac{q+2\sqrt{q}}3\right\rfloor$.
\end{theorem}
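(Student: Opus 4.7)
The plan is to realize $C$ as the evaluation code of a carefully designed subspace of functions on a \emph{maximal} elliptic curve $\fE/\F_q$ equipped with an $\F_q$-rational order-$3$ automorphism $\sigma$ whose quotient $\fE/\langle\sigma\rangle$ is a rational curve. The hypotheses on $p$ and the parity of $a$ are precisely what make such a pair available. When $p\equiv 2\pmod 3$ and $a$ is even, I would take the $j=0$ curve $y^2 = x^3 + b$ for a suitably chosen $b\in\F_q^\times$: a supersingular twist of this curve is maximal over $\F_q$, and it carries the automorphism $\sigma(x,y)=(\zeta x,y)$ with $\zeta$ a primitive cube root of unity (which lies in $\F_{p^2}\subseteq\F_q$ since $a$ is even). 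When $p=3$, I would instead take the supersingular curve $y^2 = x^3 - x$ over $\F_q$ together with $\sigma(x,y) = (x+1,y)$, which has order $3$ in characteristic $3$. In both cases the Hasse--Weil bound is attained, $\#\fE(\F_q) = q + 1 + 2\sqrt{q}$; after removing the at most three $\F_q$-rational fixed points of $\sigma$, the remaining rational points split into complete $\sigma$-orbits of size $3$, and an elementary mod-$3$ count shows that at least $\lfloor(q+2\sqrt{q})/3\rfloor$ such orbits are available.

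Once $(\fE,\sigma)$ has been pinned down, I would set up the function-field data. Riemann--Hurwitz forces $\fE/\langle\sigma\rangle$ to have genus $0$, so the fixed subfield $F:=\F_q(\fE)^{\langle\sigma\rangle}$ is rational. I would write $F=\F_q(y)$, where $y$ is the $\sigma$-invariant Weierstrass coordinate, and use $x$ as a generator of the degree-$3$ extension $\F_q(\fE)/F$. For any $\ell\le\lfloor(q+2\sqrt{q})/3\rfloor$, I pick $\ell$ of the non-fixed $\sigma$-orbits and let their union $S$ be the set of $n=3\ell$ evaluation points; on each such orbit $y$ is constant while $x$ takes three distinct values ($\zeta^i x_0$ or $x_0+i$, respectively, for $i=0,1,2$). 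The code is then $C := \mathrm{ev}_S(V)$, where
\[
V \;=\; \mathrm{span}_{\F_q}\{\, y^i : 0 \le i \le t\,\} \;\oplus\; \mathrm{span}_{\F_q}\{\, x y^j : 0 \le j \le t-1\,\}.
\]
The pole orders at $\Pin$ of these basis elements are $3i$ and $2+3j$ respectively, all distinct and bounded above by $3t$, so $V$ is a $(2t+1)$-dimensional subspace of $\mL(3t\cdot\Pin)$.

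From here the verification is routine. Since $\deg(3t\cdot\Pin)=3t<n$, the evaluation map is injective on $V$, so $\dim C = 2t+1 = k$ and the standard AG-code bound yields $d(C)\ge n-3t$. For the locality property, on each chosen orbit the restriction of $V$ coincides with the $2$-dimensional space $\{\alpha + \beta x : \alpha,\beta\in\F_q\}$, evaluated at three points where $x$ takes three distinct values; this is a $[3,2,2]$ MDS code, so every coordinate of an orbit is recoverable from the other two, giving locality $r=2$. Substituting $(n,k,r)=(3\ell,2t+1,2)$ into the Singleton-type bound \eqref{eq:x1} gives $d\le n-(2t+1)-(t+1)+2 = n-3t$, which matches the lower bound, so $d(C)=n-3t$ and $C$ is optimal. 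The principal obstacle is the existence step in the first paragraph: exhibiting a maximal elliptic curve with a rational order-$3$ automorphism in each of the two characteristic regimes and arranging that enough complete orbits survive after deleting the rational fixed points of $\sigma$. This also explains the omission of $p\equiv 1\pmod 3$ from the statement, since there the natural $j=0$ curve is ordinary and fails to be maximal over $\F_q$.
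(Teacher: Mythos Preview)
Your approach is the same as the paper's: a maximal elliptic curve with an $\F_q$-rational order-$3$ automorphism, quotient by it to get a rational subfield, and evaluate the subspace $V=\langle z^i,\,xz^j\rangle\subset\mL(3tO)$ on complete $\sigma$-orbits. The code-theoretic verification (dimension $2t+1$, inclusion in $\mL(3tO)$, locality via the $[3,2,2]$ restriction to each orbit, matching the Singleton-type bound) is correct and matches Proposition~\ref{prop:3.5} of the paper.

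The gaps are in the existence step, which you yourself flag as the principal obstacle. First, $p=2$ satisfies $p\equiv 2\pmod 3$, but the equation $y^2=x^3+b$ is singular in characteristic~$2$ (its discriminant $-432\,b^2$ vanishes); you need the form $y^2+y=x^3+\gamma$ instead, which still admits $\sigma(x,y)=(\zeta x,y)$. The paper separates out $p=2$ for exactly this reason (Lemma~\ref{lem:2.13}). Second, for $p=3$ you name the single curve $y^2=x^3-x$, but this curve is maximal over $\F_{3^a}$ only when $a\equiv 2\pmod 4$; for $a\equiv 0\pmod 4$ it is \emph{minimal} (Lemma~\ref{lem:2.11}), and you must pass to a quadratic twist $y^2=x^3+\theta^2 x$ as in Lemma~\ref{lem:2.14}. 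You handled the analogous twisting issue for $p\equiv 2\pmod 3$, $p\neq 2,3$, so the omission here is presumably an oversight, but as written the $p=3$ case covers only half the admissible $a$. Once these two points are repaired, your orbit count and the rest of the argument go through as in the paper.
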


\begin{theorem}\label{thm:1.2}
Let $q=p^a$ for a prime $p$ and an even number $a>0$. Then  there exists an optimal $q$-ary $[n=(r+1)\ell, k=r(t-1)+1,d=n-(t-1)(r+1)]$ locally repairable code with locality $r$ for any integers $t$ and $\ell$ satisfying $1\le t<\ell\le\left\lfloor\frac{q+2\sqrt{q}-r-2}{r+1}\right\rfloor$ if $p$ and $r$ satisfy one of the followings.
\begin{itemize}
\item[{\rm (i)}] $r=3$, $p=2$ or $p\equiv 3\pmod{4}$.
\item[{\rm (ii)}] $r=5$, $p=3$ or $p\equiv 2\pmod{3}$.
\item[{\rm (iii)}] $r=7$, $p=2$.
\item[{\rm (iv)}] $r=11$, $p=2$ or $3$.
\item[{\rm (v)}] $r=23$, $p=2$.
\end{itemize}
\end{theorem}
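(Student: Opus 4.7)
The plan is to exploit automorphism subgroups of elliptic curves to produce algebraic-geometry codes whose coordinate fibres enforce the required locality $r$. Since $a$ is even, $q$ is a square and elliptic curves attaining $|\fE(\F_q)|=q+1+2\sqrt{q}$ exist. For each of (i)--(v) I would exhibit such a maximal curve $\fE/\F_q$ together with a subgroup $\mG\le\Aut(\fE/\F_q)$ of order $r+1$ fixing the identity $O$. This is where the arithmetic hypotheses on $p$ enter: the group of $\fE$-automorphisms fixing $O$ has order $2$ generically, $4$ at $j=1728$, $6$ at $j=0$, and the exceptional orders $12$ and $24$ arise only for supersingular curves in characteristics $3$ and $2$. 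The congruence conditions in (i)--(v) are exactly those needed to realize the required $\mG$ as an $\F_q$-rational subgroup and to ensure that the underlying curve is supersingular, hence maximal over an even-degree extension of the prime field.

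The code is the elliptic analogue of the Jin--Ma--Xing construction from \cite{JMX17}. Because $\mG$ has a fixed point, the quotient $Y=\fE/\mG$ has genus zero, so $F:=\F_q(\fE)^\mG\cong\F_q(z)$ and $\pi:\fE\to Y$ is a Galois cover of degree $r+1$. Choose a primitive element $u\in\F_q(\fE)$ for the extension $\F_q(\fE)/F$ such that within every free $\mG$-orbit the values $\{u(\sigma P):\sigma\in\mG\}$ are pairwise distinct. Fix a pole place $Q_\infty\in Y$, pick $\ell$ free orbits $A_1,\ldots,A_\ell\subset\fE(\F_q)$ avoiding the ramification locus and the fibre over $Q_\infty$, and set $S=\bigsqcup_{j=1}^\ell A_j$, so $|S|=n=(r+1)\ell$. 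Pick integers $m_0,\ldots,m_{r-1}\ge 0$ and define
\[
V_t\;=\;\bigoplus_{i=0}^{r-1} u^i\cdot \F_q\langle 1,z,\ldots,z^{m_i}\rangle\;\subset\;\F_q(\fE),
\]
with $\dim_{\F_q}V_t=\sum_i(m_i+1)=r(t-1)+1$ and with every $f\in V_t$ having pole divisor on $\fE$ (supported on $\pi^{-1}(Q_\infty)$) of degree at most $(r+1)(t-1)$. The code is $C=\{(f(P))_{P\in S}:f\in V_t\}$.

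Locality is immediate: every $f\in V_t$ has the form $\sum_{i=0}^{r-1}g_i(z)u^i$, and since $z$ is constant on each orbit $A_j$ while $u$ takes $r+1$ distinct values there, $f|_{A_j}$ lies in the $r$-dimensional subspace of $\F_q^{A_j}$ spanned by $1,u,\ldots,u^{r-1}$, yielding one parity-check per orbit. For distance, a nonzero $f\in V_t$ has at most $(r+1)(t-1)=n-d$ zeros on $S$ (bounded by the degree of its pole divisor on $\fE$), so its Hamming weight is at least $d=n-(r+1)(t-1)$. Combined with $n>(r+1)(t-1)$ this also gives injectivity of the evaluation map, hence $k=\dim V_t=r(t-1)+1$. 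A direct check yields $\lceil k/r\rceil=t$, so the Singleton-type bound reads $d\le n-k-t+2=n-(r+1)(t-1)$, which the code meets with equality.

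The main obstacle is the ramification bookkeeping at the fibre $\pi^{-1}(Q_\infty)$, because the pole orders of the candidate basis functions $u^iz^j$ must simultaneously be controlled so that exactly $r(t-1)+1$ of them are independent with pole degree $\le (r+1)(t-1)$. In the tame cases (part (i) with $p\equiv 3\pmod 4$ and part (ii) with $p\equiv 2\pmod 3$) this reduces to a short Riemann--Hurwitz calculation with $z$ totally ramified at $Q_\infty$ and $u$ chosen of pole order coprime to $r+1$. In the wildly ramified cases $p\mid r+1$ --- $r=3$ with $p=2$; all of (iii), (iv), (v); and $r=5$ with $p=3$ --- one must compute the higher ramification filtration of $\mG$ at the ramified place and verify that a suitable primitive element $u$ can be chosen so that the pole orders line up as required. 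Each of the five parts therefore demands its own explicit Weierstrass model of $\fE$, its own description of the $\F_q$-rational action of $\mG$, and its own check that enough free $\mG$-orbits (up to the stated bound $\ell\le\lfloor(q+2\sqrt{q}-r-2)/(r+1)\rfloor$) are genuinely available on $\fE(\F_q)$.
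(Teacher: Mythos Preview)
Your overall architecture---maximal elliptic curve, subgroup $\mG\le\Aut(\fE/\F_q)$ of order $r+1$, rational quotient $F=\F_q(z)$, evaluation on free $\mG$-orbits---matches the paper. The gap is in your choice of spanning functions. You propose $V_t=\bigoplus_{i=0}^{r-1}u^i\cdot\F_q\langle 1,z,\ldots,z^{m_i}\rangle$ for a primitive element $u$, relying on a Vandermonde argument for the local MDS property. But on an elliptic curve every nonconstant function has pole degree $\ge 2$, so $u^{r-1}$ has pole degree $\ge 2(r-1)$. A short count shows that under the constraint $\deg(u^iz^{m_i})_\infty\le (r+1)(t-1)$ the maximum possible $\sum_i(m_i+1)$ is $rt-(3r-5)/2$ when $\deg(u)_\infty=2$, and this falls short of $r(t-1)+1$ as soon as $r\ge 5$. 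For $r=3$ it is exactly tight, but for $r\in\{5,7,11,23\}$ no choice of $m_i$ can simultaneously give the right dimension and the right pole bound. You flag this bookkeeping as ``the main obstacle'' but do not resolve it; in fact it cannot be resolved with powers of a single $u$.

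The paper's Proposition~3.3 sidesteps this by abandoning the Vandermonde idea. Instead of $1,u,\ldots,u^{r-1}$ it takes $w_0=1$ and, for $1\le i\le r-1$, a function $w_i$ with $(w_i)_\infty=P_1+\cdots+P_{i+1}$, where $P_1,\ldots,P_{r+1}$ are the poles of $z$; these exist by Riemann--Roch on the elliptic curve and have pole degrees $0,2,3,\ldots,r$, so $w_iz^{t-2}\in\mL\bigl((t-1)\sum_jP_j\bigr)$ for all $i$. The price is that the matrix $(w_i(P_{k}))$ is no longer Vandermonde, and the crucial step---that every $r\times r$ minor is nonzero---requires a genuinely new argument: if some minor vanished, one would produce a principal divisor $P-Q$ with $P\neq Q$ rational, contradicting Lemma~2.1 (there are no degree-one functions on an elliptic curve). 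This use of the group law, together with the hypothesis that $\{\sigma(x):\sigma\in\mG\}$ has size $|\mG|/2$, is the heart of the proof and is missing from your proposal.
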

\begin{rmk} \begin{itemize}
\item [(i)] In the paper \cite{BHHMV16}, a few optimal locally repairable codes such as $4$-ary $[18,11,2]$ code with locality $2$, $5$-ary $[24,17,3]$ code with locality $3$ etc are found based on various surfaces. The $q$-ary optimal locally repairable codes  given in \cite{BHHMV16} have length around $q^2$, but small distance $d$ and locality $r$ such as $d=3$ and $r=2,3, 4$.  The optimal codes in this paper have length slightly bigger than $q+1$. However, the minimum distance of our code can be linear in length and locality can be as large as $23$.
\item[(ii)]
Although we only state the result for codes over $\F_{q^a}$ with an even integer $a$, the construction in this paper also applies to the case where $a$ is odd.
     \item[(iii)] The conditions $p\equiv 3\pmod{4}$ and $p\equiv 2\pmod{3}$ in (i) and (ii) of Theorem \ref{thm:1.2} are proposed in order to find explicit maximal elliptic curves.
         \end{itemize}
\end{rmk}

\subsection{Organization of the paper}
In Section 2, we introduce some preliminaries for this paper such as definition of locally repairable codes, elliptic curves over finite fields, field extension, algebraic geometry codes,  etc. In Section 3, we present explicit constructions of maximal/minimal elliptic curves with automorphism groups determined. In the last section, we give explicit constructions of  optimal locally repairable codes from elliptic curves given in Section 3. In addition, we will prove our main results, namely Theorems \ref{thm:1.1} and \ref{thm:1.2} in the last section.

\section{Preliminaries}\label{sec:2}
In this section, we present some preliminaries on locally repairable codes, elliptic curves over finite fields, field extension, algebraic geometry codes, etc.
\subsection{Locally repairable codes}\label{subsec:2.1}
Informally speaking, a block code is said with locality $r$ if  every coordinate of a given codeword can be recovered by accessing at most $r$ other coordinates of this codeword. The formal definition of a locally repairable code with locality $r$ is given as follows.

\begin{defn}
Let $C\subseteq \F_q^n$ be a $q$-ary block code of length $n$. For each $\Ga\in\F_q$ and $i\in \{1,2,\cdots, n\}$, define $C(i,\Ga):=\{\bc=(c_1,\dots,c_n)\in C\; | \; c_i=\Ga\}$. For a subset $I\subseteq \{1,2,\cdots, n\}\setminus \{i\}$, we denote by $C_{I}(i,\Ga)$ the projection of $C(i,\Ga)$ on $I$.
Then $C$ is called a locally repairable code with locality $r$ if, for every $i\in \{1,2,\cdots, n\}$, there exists a subset
$I_i\subseteq \{1,2,\cdots, n\}\setminus \{i\}$ with $|I_i|\le r$ such that  $C_{I_i}(i,\Ga)$ and $C_{I_i}(i,\Gb)$ are disjoint for any $\Ga\neq \Gb\in\F_q$.
\end{defn}
Apart from the usual parameters: length, rate and minimum distance,  the locality of a   locally repairable code plays a crucial role. In this paper, we always consider locally repairable codes that are linear over $\F_q$. Thus, a $q$-ary \LRC of length $n$, dimension $k$, minimum distance $d$ and locality $r$ is said to be an $[n,k,d]_q$-\LRC with locality $r$.

If we ignore the minimum distance of a $q$-ary locally repairable code, then there is a constraint on the rate \cite{GHSY12}, namely,
\begin{equation}\label{eq:x2}
\frac{k}{n} \le \frac{r}{r+1}.
\end{equation}

In this paper, the minimum distance of a \LRC is taken into consideration and we always refer an optimal code to a code achieving the bound \eqref{eq:x1}. For an $[n,k,d]$-linear code, $k$ information symbols can recover the whole codeword. Thus, the locality $r$ is usually upper bounded by $k$. If we allow $r=k$, i.e., there is no constraint on locality, then the bound \eqref{eq:x1} becomes the usual Singleton bound that shows constraint on $n,k$ and $d$ only. The other extreme case is that the locality $r$ is $1$. In this case, the \LRC is a repetition code by repeating each symbol twice and the bound \eqref{eq:x1} becomes $d(C)\le n-2k+2$ which shows  the Singleton bound for repetition codes.

\subsection{Elliptic curves over finite fields}

By a curve, we will always mean a projective, smooth and absolutely irreducible algebraic curve.
An elliptic curve $\fE$ over $\F_q$ is defined   by a  nonsingular Weierstrass equation
\begin{equation}\label{eq:3}
y^2+a_1xy+a_3y=x^2+a_2x^2+a_4x+a_6,
\end{equation}
where $a_i$ are elements of $\F_q$.
 An elliptic curve over $\F_q$ is also denoted by a pair $(\fE,O)$, where $\fE$ is the curve defined by the above Weierstrass equation \eqref{eq:3}, and $O$ is the common pole of $x$ and $y$ which is called the point at infinity of $\fE$. The genus of $\fE$ is $1$. Denote by $E:=\F_q(\fE)$ and $\fE(\F_q)$ the function field of $\fE$ and the set of $\F_q$-rational points, respectively. Then the function field   $E$ is given by $E=\F_q(x,y)$, where $x$ and $y$ satisfy the above Weierstrass equation \eqref{eq:3}. The set $\fE(\F_q)$ consists of $O$ and the solutions $(a,b)\in \F_q^2$ to the Weierstrass equation \eqref{eq:3} and forms an abelian group with $O$ as the zero element. There is one-to-one correspondence between $\F_q$-rational points of  the elliptic curve $\fE/\F_q$ and rational places of its function field $E/\F_q$. The rational point $(a,b)$ corresponds to the unique common zero of $x-a$ and $y-b$ and the point $O$ corresponds the common pole of $x$ and $y$.

Let $E/\F_q$ be an elliptic function field defined above. Let $\mathbb{P}_E$ be the set of all places of $E$ and $\mathbb{P}^1_E=\{P\in \mathbb{P}_E:\deg(P)=1\}$ be the set of rational places of $E$. Then we can identify $\mathbb{P}^1_E$ with $\fE(\F_q)$. This means that $\mathbb{P}^1_E$ is also an abelian group.
The divisor group of $E/\F_q$ is defined as the free abelian group generated by $\mathbb{P}_E$ and is denoted by Div$(E)$. The set of diviosrs of degree $0$ forms a subgroup of Div$(E)$, denoted by $\text{Div}^0(E)$.
Two divisors of $E$ are called equivalent if there exist $z\in E^*$ such that $A=B+(z)$, and we denote this by $A\sim B$.
The set of divisors $$\text{Princ}(E)=\{(x)=\sum_{P\in\PP_E}\nu_P(x): x\in E^*\}$$ is called the group of principal divisors of $E/\F_q$. It is a subgroup of $\text{Div}^0(E)$.
The factor group $$\text{Cl}^0(E)=\text{Div}^0(E)/\text{Princ}(E)$$ is called the zero divisor class group of $E/\F_q$.

Then there is a group isomorphism between  $\mathbb{P}^1_E=\fE(\F_q)$ and $\text{Cl}^0(E)$ given by.
$$\Phi:\begin{cases}\mathbb{P}^1_E\rightarrow \text{Cl}^0(E),\\ P\mapsto [P-O].\end{cases}$$
The group operation of $\mathbb{P}^1_E$ is denoted by $\oplus$  and the place $O$ is the zero element of the group $\mathbb{P}^1_E$. Thus, the following holds true  for any $P,Q,R\in \mathbb{P}_E^1$:
\begin{equation}\label{eq:4}P\oplus Q=R \Leftrightarrow P+Q\sim R+O.\end{equation}

The following lemma says that two distinct points can not be equivalent to each other due to isomorphism between  $\mathbb{P}^1_E=\fE(\F_q)$ and $\text{Cl}^0(E)$.
\begin{lemma}\label{lem:2.1}
Let $\fE/\F_q$ be an elliptic curve with function field $E=\F_q(\fE)$ and let $P,Q$ be two rational places of $E$. Then
$$P-Q=(z) \text{ for some } z\in E^* \text{ if and only if } P=Q.$$
\end{lemma}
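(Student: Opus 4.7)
\medskip
\noindent\textbf{Proof proposal.} The ``if'' direction is trivial: if $P=Q$ then $P-Q=0=(1)$, where $1\in E^*$ is the constant function. The content is the ``only if'' direction, and my plan is to derive a contradiction with the fact that the elliptic function field $E=\F_q(\fE)$ has genus one.

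Suppose for contradiction that $P\ne Q$ are distinct rational places and that $P-Q=(z)$ for some $z\in E^*$. Then $z$ must be non-constant (otherwise $(z)=0$), and the zero and pole divisors of $z$ are exactly
\[
(z)_0 = P, \qquad (z)_\infty = Q,
\]
each a single rational place appearing with multiplicity one. I would then invoke the fundamental identity from the theory of algebraic function fields: for any non-constant $z\in E$, the degree of the extension $E/\F_q(z)$ equals the degree of the pole divisor of $z$ in $E$, i.e.\ $[E:\F_q(z)]=\deg((z)_\infty)$. Here this degree is $\deg(Q)=1$, so $E=\F_q(z)$ is a rational function field. But a rational function field has genus $0$, contradicting $g(E)=1$. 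Hence $P=Q$.

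An equivalent formulation that I would mention is via the group isomorphism $\Phi:\mathbb{P}_E^1\to\mathrm{Cl}^0(E)$, $P\mapsto [P-O]$ recalled just before the statement: the lemma is precisely what is needed to verify that $\Phi$ is well-defined and injective, since $P-Q$ being principal says $[P-O]=[Q-O]$ in the class group, which must then force $P=Q$. So the real mathematical content sits in the genus argument above.

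There is no serious obstacle here; the only nontrivial input is the degree formula $[E:\F_q(z)]=\deg((z)_\infty)$, which is a standard result for algebraic function fields and which I would simply cite from a standard reference (e.g.\ Stichtenoth). The rest of the proof is just observing that a pole divisor consisting of a single rational place has degree one, which forces the function field to be rational and hence of genus zero, clashing with the hypothesis that $\fE$ is an elliptic curve.
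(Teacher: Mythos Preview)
Your proof is correct. The paper does not actually give a proof of this lemma; it merely states in the lead-in sentence that the result holds ``due to isomorphism between $\mathbb{P}^1_E=\fE(\F_q)$ and $\text{Cl}^0(E)$,'' i.e.\ it invokes the injectivity of the map $\Phi:P\mapsto[P-O]$ as a known fact. Your argument via $[E:\F_q(z)]=\deg((z)_\infty)=1$ forcing $g(E)=0$ is the standard self-contained justification, and in fact it is exactly what one needs to prove that $\Phi$ is injective in the first place---a point you rightly note. So your route is more elementary and more complete than the paper's one-line appeal to $\Phi$; the paper's approach is shorter only because it outsources the real content to the cited isomorphism. The degree formula you cite is indeed \cite[Theorem 1.4.11]{St09}, which the paper uses elsewhere.
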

The following lemma gives an upper bound on the size of $\fE(\F_q)$. It is called the Hasse-Weil bound \cite[Theorem 1.1 of Chapter 5]{Si86}.
\begin{lemma}\label{lem:2.2}
Let $\fE/\F_q$ be an elliptic curve defined over a finite field. Then
$$||\fE(\F_q)|-q-1|\le 2\sqrt{q}.$$
\end{lemma}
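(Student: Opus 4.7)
The plan is to prove the Hasse--Weil bound using the $q$-th power Frobenius endomorphism of $\fE$, following the classical Silverman-style argument. Let $\phi:\fE\to\fE$ be the $q$-Frobenius, the endomorphism sending (in affine coordinates) $(a,b)\mapsto(a^q,b^q)$, with $\phi(O)=O$. First I would observe that a geometric point $P\in\fE(\overline{\F}_q)$ lies in $\fE(\F_q)$ if and only if its coordinates are $\F_q$-rational, i.e., if and only if $\phi(P)=P$; equivalently, $(\phi-1)(P)=O$. Hence $\fE(\F_q)=\ker(\phi-1)$ as subgroups of $\fE(\overline{\F}_q)$.

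Next I would reduce the counting problem to computing a degree. Since $\phi$ is purely inseparable of degree $q$ (its induced map on function fields is the $q$-th power map), the differential $d\phi$ vanishes, so $d(\phi-1)=-1$ is nowhere zero; therefore $\phi-1$ is a separable isogeny, and its kernel has size equal to its degree:
\[
|\fE(\F_q)|=|\ker(\phi-1)|=\deg(\phi-1).
\]
The goal is thus reduced to showing $|\deg(\phi-1)-q-1|\le 2\sqrt{q}$.

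The main technical input is that the degree map $\deg:\mathrm{End}(\fE)\to\ZZ$ is a positive definite quadratic form. I would establish (or cite) the parallelogram identity
\[
\deg(\alpha+\beta)+\deg(\alpha-\beta)=2\deg(\alpha)+2\deg(\beta)\quad\text{for all }\alpha,\beta\in\mathrm{End}(\fE),
\]
which follows from the fact that under the isomorphism $\fE\simeq\mathrm{Cl}^0(E)$ the degree of an isogeny equals the index of its image in the Picard group, giving a $\ZZ$-bilinear pairing $\langle\alpha,\beta\rangle=\tfrac12(\deg(\alpha+\beta)-\deg(\alpha)-\deg(\beta))$ with $\langle\alpha,\alpha\rangle=\deg(\alpha)\ge 0$. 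Positive definiteness on $\mathrm{End}(\fE)\otimes\RR$ then yields the Cauchy--Schwarz inequality
\[
|\langle\alpha,\beta\rangle|\le\sqrt{\deg(\alpha)\,\deg(\beta)}.
\]

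Finally I would apply Cauchy--Schwarz with $\alpha=\phi$ and $\beta=1$, using $\deg(\phi)=q$ and $\deg(1)=1$:
\[
\bigl|\deg(\phi-1)-q-1\bigr|=2\bigl|\langle\phi,1\rangle\bigr|\le 2\sqrt{q}.
\]
Combined with the identification $|\fE(\F_q)|=\deg(\phi-1)$ this gives the stated bound. The main obstacle is the positive definiteness of the degree form; the identity $[m]^*D\sim m^2D$ on divisor classes (equivalently, $\deg[m]=m^2$) together with the parallelogram law above is the cleanest route, and once it is in hand the rest is a one-line Cauchy--Schwarz application.
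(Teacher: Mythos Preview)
The paper does not give its own proof of this lemma; it simply cites the Hasse--Weil bound from Silverman \cite[Theorem~1.1 of Chapter~V]{Si86}. Your argument is exactly the classical Frobenius/positive-definite-degree-form proof found there, so your proposal is correct and coincides with the referenced proof.
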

If the number $|\fE(\F_q)|$  of the rational places of $\fE$ attains the upper bound $1+q+2\sqrt{q}$ (or lower bound $1+q-2\sqrt{q}$, respectively), then $\fE$ is called a maximal elliptic curve over $\F_q$ (or a minimal elliptic curve over $\F_q$, respectively). Note that in this case $q$ must be  square of a prime power.

The zeta function of $\fE/\F_q$ is defined to be the following power series $$Z(\fE/\F_q;T)=\exp\left(\sum_{n=1}^{\infty} |\fE(\F_{q^n})| \frac{T^n}{n}\right),$$ where $\fE(\F_{q^n})$ stands for the set of $\F_{q^n}$-rational points of $\fE$. The zeta function of $\fE$ is a simple rational function (see \cite[Section 2 of Chapter 5]{Si86}).
\begin{lemma}\label{lem:2.3}
Let $\fE/\F_q$ be an elliptic curve. Then there is an integer $t\in \mathbb{Z}$ with $|t|\le 2\sqrt{q}$ such that
$$Z(\fE/\F_q;T)=\frac{1-tT+qT^2}{(1-T)(1-qT)}.$$
Furthermore, $t=q+1-|\fE(\F_q)|$ and $1-tT+qT^2=(1-\Ga T)(1-\Gb T)$ for some complex numbers $\Ga, \Gb$ with $|\Ga|=|\Gb|=\sqrt{q}.$
\end{lemma}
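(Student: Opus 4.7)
The plan is to reduce the computation of the zeta function to understanding the action of the Frobenius endomorphism, and then invoke Hasse's theorem (the Riemann hypothesis for elliptic curves) to bound $t$.

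First, I would introduce the $q$-power Frobenius $\pi_q\colon \fE\to\fE$ defined by $(x,y)\mapsto (x^q,y^q)$ on affine points and fixing $O$. Since $\F_{q^n}$-rational points of $\fE$ are exactly the fixed points of $\pi_q^n$, and since $1-\pi_q^n$ is a separable isogeny (its derivative at $O$ is nonzero because $\pi_q$ is purely inseparable), we have
\[
|\fE(\F_{q^n})|=\#\ker(1-\pi_q^n)=\deg(1-\pi_q^n).
\]
The key fact from the theory of elliptic curves is that $\pi_q$ satisfies a quadratic relation in $\mathrm{End}(\fE)$, namely $\pi_q^2-t\pi_q+q=0$ where $t=q+1-|\fE(\F_q)|$. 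Let $\Ga,\Gb\in\CC$ be the roots of $X^2-tX+q$; then $\Ga+\Gb=t$, $\Ga\Gb=q$, and
\[
\deg(1-\pi_q^n)=(1-\Ga^n)(1-\Gb^n)=1-\Ga^n-\Gb^n+q^n.
\]

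Second, I would substitute this expression into the definition of the zeta function. Writing
\[
\log Z(\fE/\F_q;T)=\sum_{n\ge 1}\frac{|\fE(\F_{q^n})|}{n}T^n=\sum_{n\ge 1}\frac{T^n}{n}+\sum_{n\ge 1}\frac{(qT)^n}{n}-\sum_{n\ge 1}\frac{(\Ga T)^n}{n}-\sum_{n\ge 1}\frac{(\Gb T)^n}{n},
\]
and using $-\log(1-uT)=\sum_{n\ge 1}(uT)^n/n$, the right-hand side collapses to $\log\bigl((1-\Ga T)(1-\Gb T)\bigr)-\log(1-T)-\log(1-qT)$. Exponentiating gives the claimed rational expression
\[
Z(\fE/\F_q;T)=\frac{(1-\Ga T)(1-\Gb T)}{(1-T)(1-qT)}=\frac{1-tT+qT^2}{(1-T)(1-qT)}.
\]

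Third, for the bound $|t|\le 2\sqrt{q}$ and the absolute value statement $|\Ga|=|\Gb|=\sqrt{q}$, I would appeal to the positive definiteness of the degree form on $\mathrm{End}(\fE)\otimes\mathbb{R}$. For any integers $m,n$, the degree $\deg(m-n\pi_q)=m^2-mnt+n^2q$ is a nonnegative integer, so the associated quadratic form in $(m,n)$ is positive semidefinite, forcing its discriminant $t^2-4q\le 0$. This shows $|t|\le 2\sqrt{q}$ and also that the conjugate roots $\Ga,\Gb$ of $X^2-tX+q$ are complex conjugates with $|\Ga|^2=\Ga\overline{\Ga}=\Ga\Gb=q$, hence $|\Ga|=|\Gb|=\sqrt{q}$.

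The main obstacle is the second step, specifically establishing the characteristic equation $\pi_q^2-t\pi_q+q=0$ and the positive definiteness of the degree form; both are nontrivial structural facts about $\mathrm{End}(\fE)$ and are precisely where one would invoke Silverman's treatment. Everything else is a formal manipulation of power series, standard valuation-theoretic arguments on isogenies, and elementary inequalities.
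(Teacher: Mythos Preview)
Your argument is correct and is precisely the standard proof found in Silverman's \emph{The Arithmetic of Elliptic Curves}, Chapter~V, \S2. The paper itself does not give a proof of this lemma at all: it simply states the result and cites \cite[Section 2 of Chapter 5]{Si86}, so your write-up is essentially an expansion of that citation rather than an alternative route. One minor remark: in your final step you assert that $\Ga,\Gb$ are complex conjugates, but when $t^2=4q$ they are the equal real number $t/2=\pm\sqrt{q}$; the conclusion $|\Ga|=|\Gb|=\sqrt{q}$ still holds in that boundary case, so no harm is done.
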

The polynomial $1-tT+qT^2$ is called the $L$-polynomial of $\fE/\F_q$ and denoted by $$L(\fE/\F_q, T)=1-tT+qT^2.$$
It is easy to see that $\fE/\F_q$ is maximal (or minimal, respectively) if and only if $t=-2\sqrt{q}$ and $L(\fE/\F_q, T)=1+2\sqrt{q}T+qT^2=(1+\sqrt{q}T)^2$ (or $t=2\sqrt{q}$ and $L(\fE/\F_q, T)=1-2\sqrt{q}T+qT^2=(1-\sqrt{q}T)^2$, respectively).

\subsection{Extension theory of function fields}
Let $E/\F_q$ be a function field with the full constant field $\F_q$.
Let $\mathbb{P}_E$ denote by the set of places of $E$ and let  $g(E)$ denote by the genus of $E$.
Let $G$ be a divisor of $E$.
The Riemann-Roch space $$\mathcal{L}(G)=\{z\in E^*: (z)\ge -G\}\cup \{0\}$$
is a finite dimensional vector space over $\F_q$ and its dimension $\dim_{\F_q}\mL(G)$ is at least $\deg(G)-g(E)+1$ from Riemann's theorem  \cite[Theorem 1.4.17]{St09}. If $E$ is an elliptic function field and $\deg(G)\ge 1$, then $\dim_{\F_q}\mL(G)=\deg(G)$.
Let $F$ be a subfield of $E$ with the same full constant field $\F_q$ such that $E/F$ is separable. Then
the Hurwitz genus  formula \cite[Theorem 3.4.13]{St09} yields
\begin{equation}\label{eq:5}2g(E)-2=[E:F](2g(F)-2)+\deg \text{Diff}(E/F),\end{equation}
where Diff$(E/F)$ stands for the different of $E/F$ \cite[Theorem 3.4.13]{St09}.

Let $\Aut(E/\F_q)$ be the automorphism group of the function field $E$ over $\F_q$, that is,
$$\Aut(E/\F_q)=\{\sigma:\;  \sigma \text{ is an } \F_q \text{-automorphism of } E\}.$$
Now we consider the group action of the automorphism group $\Aut(E/\F_q)$ on the set of places $\mathbb{P}_E$.
For any automorphism $\Gs\in \Aut(E/\F_q)$ and any place $P\in \mathbb{P}_E$, then $\Gs(P)=\{\Gs(z):z\in P\}$ is a place of $E$ as well.
Let $\mG$ be a subgroup of $\Aut(E/\F_q)$. The fixed subfield of $E$ with respect to $\mG$ is defined by
$$E^{\mG}=\{z\in E: \Gs(z)=z \text{ for all } \Gs\in \mG\}.$$
From the Galois theory, $E/E^{\mG}$ is a Galois extension with $\Gal(E/E^{\mG})=\mG.$
For any place $P\in \mathbb{P}_E$, the place $P\cap E^{\mG}$ is splitting completely in $E$ if and only if $\Gs(P)$ are pairwise distinct for all automorphisms $\Gs\in \mG$.

\begin{lemma} \label{lem:2.4}
Let $\mathfrak{E}/\F_q$ be an elliptic curve with function field $E=\F_q(\fE)$.  Let $F$ be the subfield of $E$ such that $E/F$ is a finite separable extension and there is a place $Q$ of $E$  with ramification index $e_Q>1$. Then $F$ is a rational function field.
\end{lemma}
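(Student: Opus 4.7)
The plan is to extract the conclusion from the Hurwitz genus formula together with the observation that a rational place of $E$ restricts to a rational place of $F$.

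First I would apply the Hurwitz genus formula (equation \eqref{eq:5} in the excerpt) with $g(E)=1$, obtaining
\[
0 \;=\; [E:F]\bigl(2g(F)-2\bigr)+\deg \mathrm{Diff}(E/F).
\]
Since the hypothesis guarantees a place $Q$ of $E$ with ramification index $e_Q>1$, Dedekind's different theorem gives a different exponent $d(Q\mid Q\cap F)\ge e_Q-1\ge 1$, so $\deg \mathrm{Diff}(E/F)>0$. Substituting in the displayed equality forces $2g(F)-2<0$, hence $g(F)=0$.

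Next I would upgrade "genus zero" to "rational." Over a finite field, a genus $0$ function field is rational precisely when it admits a place of degree $1$. The elliptic curve $\fE/\F_q$ comes equipped with the point at infinity $O$, which is an $\F_q$-rational place of $E$. Let $P'=O\cap F$ be its restriction to $F$. From the relation $\deg O = f(O\mid P')\,\deg P'$ and $\deg O=1$, we get $\deg P'=1$, so $F$ has a rational place. Combined with $g(F)=0$, this shows $F$ is a rational function field over $\F_q$.

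The argument is short; the only point that needs care is the positivity of $\deg \mathrm{Diff}(E/F)$. The hypothesis that $E/F$ is separable is essential here so that Dedekind's different theorem applies and gives a strictly positive contribution from the ramified place $Q$. With that in hand, the remaining steps are mechanical.
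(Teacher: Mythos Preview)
Your proposal is correct and follows essentially the same approach as the paper: apply the Hurwitz genus formula together with Dedekind's different theorem to force $g(F)=0$. Your extra step---restricting $O$ to exhibit a rational place of $F$---is more explicit than the paper (which simply equates $g(F)=0$ with rationality), but over a finite field genus zero already implies rationality, so both versions are valid.
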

\begin{proof} Let $d_Q$ be the different exponent of $Q$. Then $d_Q\ge e_Q-1\ge 1$ from Dedekind's Different Theorem \cite[Theorem 3.5.1]{St09}. By the Hurwitz genus formula \eqref{eq:5}, we have
\[0=2g(E)-2\ge [E:F](2g(F)-2)+d_Q\deg(Q).\]
This gives $g(F)\le 1-\frac{d_Q\deg(Q)}{2[E:F]}<1$. This forces that $g(F)=0$, i.e, $F$ is a rational function field.
\end{proof}

\subsection{Automorphism groups of elliptic curves} Consider the Weierstrass equation \eqref{eq:3}.
Let $b_2=a_1^2+4a_2$, $b_4=2a_4+a_1a_3$, $b_6=a_3^2+4a_6$ and $b_8=a_1^2a_6+4a_2a_6-a_1a_3a_4+a_2a_3^2-a_4^2.$ Then the quantity $\Delta$ defined by $$\Delta=-b_2^2b_8-8b_4^3-27b_6^2+9b_2b_4b_6$$
is called the discriminate of the Weierstrass equation \eqref{eq:3}. Then the Weierstrass equation \eqref{eq:3} is nonsingular if and only if $\Delta\neq 0$. In this case, it defines an elliptic curve $\fE$ over $\F_q$.
Furthermore, let $c_4=b_2^2-24b_4$ and $c_6=-b_2^3+36b_2b_4-216b_6$. Then the quantity $j(\fE)$ defined by $$j(\fE)=c_4^3/\Delta$$ is called the $j$-invariant of the elliptic curve $\fE$.
Two elliptic curves are isomorphic over the algebraic closure $\overline{\F}_q$ of $\F_q$ if and only if they both have the same $j$-invariant.
Under isomorphism, the Weierstrass equation \eqref{eq:3} of an elliptic curve can be simplified to some forms
\cite[Proposition 1.1 of Appendix A]{Si86}.

We denote by $\Aut(\fE)$ the set of  automorphisms of $\fE$ over the algebraic closure $\bar{\F}_q$, i.e., the automorphism group $\Aut(\fE/\bar{\F}_q)$. Then every automorphism $\Gs\in\Aut(\fE)$ fixes the zero point $O$.
 Let $\Aut(\fE/{\F}_q)$  the subgroup of $\Aut(\fE/\bar{\F}_q)$ in which every automorphism is defined over $\F_q$. The following result can be found in (see \cite[Theorem III.10.1]{Si86}).

\begin{lemma}\label{lem:2.5} Let $\mathfrak{E}/\F_q$ be an elliptic curve.  Then the order of $\Aut(\fE)$  divides $24$. More precisely speaking, the order of
$\Aut(\fE)$ is given by the following list:
\begin{itemize}
\item[{\rm (i)}] $|\Aut(\fE)|=2$  if  $j(\fE)\neq 0, 1728$;
\item[{\rm (ii)}] $|\Aut(\fE)|=4$ if $j(\fE)=1728$ and char$(\F_q)\neq 2,3$;
\item[{\rm (iii)}] $|\Aut(\fE)|=6$ if $j(\fE)=0$ and char$(\F_q)\neq 2,3$;
\item[{\rm (iv)}] $|\Aut(\fE)|=12$ if $j(\fE)=0=1728$ and char$(\F_q)=3$;
\item[{\rm (v)}] $|\Aut(\fE)|=24$ if $j(\fE)=0=1728$ and char$(\F_q)=2$.
\end{itemize}
\end{lemma}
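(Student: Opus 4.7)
The plan is to use the fact that any $\bar\F_q$-automorphism of $\fE$ fixing the origin $O$ corresponds to an admissible change of variables on the Weierstrass model that preserves the equation. Concretely, any isomorphism between two Weierstrass models $(x,y)$ and $(x',y')$ is given by
\[
x = u^2 x' + r,\qquad y = u^3 y' + s u^2 x' + t,
\]
with $u\in\bar\F_q^\times$ and $r,s,t\in\bar\F_q$. An automorphism is one such substitution that sends the defining equation back to itself. So the computation reduces to counting admissible quadruples $(u,r,s,t)$; these form a group under composition, and I want to determine its order case by case.

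In the generic characteristic regime $p=\mathrm{char}(\F_q)\neq 2,3$, I would first put $\fE$ into short Weierstrass form $y^2=x^3+Ax+B$. Then $r=s=t=0$ is forced (because the substitution must kill the $x^2$, $xy$, and $y$ terms), and the condition becomes $u^4A=A$ and $u^6B=B$. If $j(\fE)\neq 0,1728$ then both $A,B\neq 0$, so $u^4=u^6=1$, giving $u^2=1$ and $|\Aut(\fE)|=2$. If $j(\fE)=1728$ then $B=0$ and $A\neq 0$, so $u^4=1$ and $|\Aut(\fE)|=4$. If $j(\fE)=0$ then $A=0$ and $B\neq 0$, so $u^6=1$ and $|\Aut(\fE)|=6$. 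This handles cases (i)--(iii) cleanly.

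The harder cases are $p=2,3$, where $j=0$ and $j=1728$ coincide and the normal form is not the short Weierstrass form. For $p=3$ I would use the normal form $y^2=x^3+a_2x^2+a_6$ (if $j\neq 0$) or $y^2=x^3+a_4x+a_6$ (if $j=0$, which coincides with $1728$). In the latter case, admissible substitutions have the shape $x\mapsto u^2x+r$, $y\mapsto u^3y$, where $(u,r)$ must satisfy $u^4a_4=a_4$ and a cubic relation tying $r$ to $u$; a direct count shows there are $12$ such pairs, yielding (iv). For $p=2$ I would distinguish the ordinary case $a_1\neq 0$ (where $j\neq 0$) from the supersingular case $a_1=0$ (where $j=0=1728$); in the supersingular normal form $y^2+a_3y=x^3+a_4x+a_6$ the allowed substitutions are $x\mapsto u^2x+s^2$, $y\mapsto u^3y+u^2sx+t$ subject to preservation of the coefficients, and a careful accounting gives exactly $24$ such tuples, yielding (v).

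The main technical obstacle is the last case: in characteristic $2$ the group $\Aut(\fE)$ for $j=0$ is non-abelian (isomorphic to $\mathrm{SL}_2(\F_3)$), so one cannot simply count solutions of a single polynomial equation in $u$; instead one must track the interaction of the parameters $s$ and $t$ with $u$, which is where the nontrivial structure appears. Since the lemma is only quoted here from \cite{Si86}, I would either (a) refer to the tabulation in Silverman's Appendix A for the explicit verification, or (b) exhibit the $24$ automorphisms by writing down generators and checking the relations. Either way, the essential content is that once the Weierstrass normal form is fixed, $|\Aut(\fE)|$ is the cardinality of an explicit finite parameter set cut out by polynomial equations in $u,r,s,t$, and each of the five alternatives corresponds to one such count.
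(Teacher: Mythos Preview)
The paper does not actually prove this lemma; it simply records the statement and cites \cite[Theorem III.10.1]{Si86}. Your sketch is exactly the standard argument carried out there (pass to a Weierstrass normal form adapted to the characteristic and the $j$-invariant, then count the admissible substitutions $(u,r,s,t)$ preserving the equation), and it is correct.
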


Let $E$ be the function field $\F_q(\fE)$ of $\fE$ and denote by  $\Aut(E/\F_q)$ the automorphism group of $E$ fixed every element of $\F_q$. Let $F$ be the subfield of $E$ fixed by $\Aut(E/\F_q)$, i.e., $F=\{x\in E:\; \Gs(x)=x \text{ for all } \Gs\in \Aut(E/\F_q)\}$.  Then $\Aut(\fE/\F_q)=\Aut(E/\F_q)\cap \Aut(\fE)=\Gal(E/F)\cap\Aut(\fE)$ is the decomposition group of $\Gal(E/F)$ at the point $O$.

The following result shows that there are not many ramified places for an elliptic curve.
\begin{lemma} \label{lem:2.6}
Let $\mathfrak{E}/\F_q$ be an elliptic curve with function field $E=\F_q(\fE)$. Let $\mG$ be a subgroup of  $\Aut(\fE/\F_q)$ of $E$ at the point $O$ with order $|\mG|=r+1$ for a positive integer $r$.
Let $F=E^\mG$ be the fixed subfield of $E$ with respect to $\mG$. Then apart from the zero point $O$, there are at most other $r+2$ rational places of $E$ that are ramified in $E/F$.
\end{lemma}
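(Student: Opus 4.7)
The plan is to pin down the degree of the different $\text{Diff}(E/F)$ exactly using the Hurwitz genus formula, then extract the count of ramified rational places from it. The two facts I would use are (a) every element of $\mG \subseteq \Aut(\fE/\F_q)$ fixes the zero point $O$, so $O$ is totally ramified in $E/F$ with ramification index $e_O = |\mG| = r+1$, and (b) Lemma \ref{lem:2.4} applies since $O$ witnesses a ramified place with $e_O > 1$, hence $F$ is a rational function field and $g(F)=0$.

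With $g(E)=1$ and $g(F)=0$, the Hurwitz genus formula \eqref{eq:5} gives
\[
0 = 2g(E)-2 = (r+1)(2g(F)-2) + \deg\text{Diff}(E/F) = -2(r+1) + \deg\text{Diff}(E/F),
\]
so $\deg \text{Diff}(E/F) = 2(r+1)$. By Dedekind's different theorem \cite[Theorem 3.5.1]{St09}, the different exponent at $O$ satisfies $d_O \geq e_O - 1 = r$, which contributes at least $r\cdot \deg(O) = r$ to $\deg\text{Diff}(E/F)$. Hence the remaining ramified places together can contribute at most $2(r+1) - r = r+2$ to the different degree.

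Finally, every ramified rational place $P \neq O$ contributes $d_P \cdot \deg(P) = d_P \geq e_P - 1 \geq 1$ to $\deg\text{Diff}(E/F)$. Summing over such places and using the bound above forces their number to be at most $r+2$, which is exactly the statement to be proved. The computation is essentially mechanical once Lemma \ref{lem:2.4} is invoked; the only subtle point is checking that using $d_O \geq r$ (rather than $d_O = r$, which fails under wild ramification) is sound — but since larger $d_O$ only decreases the room available for other ramified places, the bound $r+2$ holds in all cases, so wild ramification at $O$ presents no obstacle.
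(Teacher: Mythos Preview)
Your proof is correct and follows essentially the same route as the paper: use Dedekind's different theorem to bound $d_O \ge r$ and $d_P \ge 1$ for each other ramified rational place, then feed these into the Hurwitz genus formula to force $m \le r+2$. The only cosmetic difference is that you invoke Lemma~\ref{lem:2.4} to pin down $g(F)=0$ and hence compute $\deg\text{Diff}(E/F)=2(r+1)$ exactly, whereas the paper simply uses $g(F)\ge 0$ and runs Hurwitz as a chain of inequalities; either way the arithmetic lands on the same bound.
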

\begin{proof}  $O$ is totally ramified in $E/F$ since $\mG$ is a subgroup of  $\Aut(\fE/\F_q)$. Assume that $P_1,\dots,P_m$ are $m$ ramified rational places of $E$ in $E/F$.  As $O$ has differnt exponent at least $r+1-1=r$ and each $P_i$ has different exponent at least $2-1=1$,   by the Hurwitz genus formula \eqref{eq:5}, we have
\[0=2g(E)-2\ge [E:F](2g(F)-2)+(r+1-1)+m(2-1)\ge -2(r+1)+r+m.\]
This gives the desired result.
\end{proof}
We now study certain specific elliptic curves and their automorphism groups.

\begin{lemma}\label{lem:2.7}
Let $q$ is an even power of $2$ and let $\fE$ be an elliptic curve over $\F_q$ defined by an equation $y^2+y=x^3+\Ga$ for some  $\Ga\in \F_q$. Then $|\Aut(\fE/\F_q)|=24$.
\end{lemma}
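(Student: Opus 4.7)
The plan is to combine Lemma \ref{lem:2.5} with an explicit parametrization of the isomorphism group of $\fE$. First, I would verify that $j(\fE)=0$: for $\fE: y^2+y=x^3+\Ga$ we read off $a_1=a_2=a_4=0$, $a_3=1$, $a_6=\Ga$, which forces $b_2=b_4=0$ and hence $c_4=b_2^2-24b_4=0$, so $j(\fE)=c_4^3/\Delta=0$. Since $\mathrm{char}(\F_q)=2$, we also have $1728\equiv 0$, so the hypothesis of Lemma \ref{lem:2.5}(v) is satisfied and $|\Aut(\fE)|=24$ over $\bar\F_q$.

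The remaining (and main) task is to show that all of these $24$ geometric automorphisms are $\F_q$-rational. Since every element of $\Aut(\fE)$ fixes the point at infinity $O$, it is an isomorphism of Weierstrass models and can be written in the standard form
$$\sigma:\ (x,y)\longmapsto (u^2x+r,\; u^3y+u^2sx+t),\qquad u\in\bar\F_q^{\ast},\; r,s,t\in\bar\F_q.$$
I would substitute this into $y^2+y=x^3+\Ga$ and compare coefficients; in characteristic $2$ the squares have no cross terms and I expect the system to collapse to
$$u^3=1,\qquad r=s^2,\qquad s^4=s,\qquad t^2+t=s^6.$$
Thus $u$ is a cube root of unity, $s\in\F_4$, $r=s^2\in\F_4$, and $s^6\in\{0,1\}$, so the Artin--Schreier equation $t^2+t=s^6$ is either $t^2+t=0$ (roots $0,1\in\F_2$) or $t^2+t=1$ (roots in $\F_4\setminus\F_2$). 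In particular $(u,r,s,t)\in\F_4^4$ in all cases.

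Since $q$ is an even power of $2$, we have $\F_4\subseteq \F_q$, so each of these quadruples lies in $\F_q$ and every such $\sigma$ is defined over $\F_q$. Counting gives $3\cdot 4\cdot 2=24$ automorphisms in $\Aut(\fE/\F_q)$, which together with the bound $|\Aut(\fE/\F_q)|\le|\Aut(\fE)|=24$ from Lemma \ref{lem:2.5}(v) yields the claim. The only genuine computation is the coefficient comparison that produces the four conditions on $(u,r,s,t)$; once they are in hand, the descent to $\F_q$ is automatic because every parameter is forced into $\F_4$.
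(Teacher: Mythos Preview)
Your proposal is correct and follows essentially the same approach as the paper: both arguments identify $j(\fE)=0$, parametrize the automorphisms by $(u,s,t)$ subject to $u^3=1$, $s^4=s$, $t^2+t=s^6$ (with $r=s^2$ determined), and count $3\cdot4\cdot2=24$ solutions lying in $\F_4\subseteq\F_q$. The paper simply cites Silverman for the conditions on $(u,s,t)$ rather than deriving them, but the content is the same.
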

\begin{proof} By \cite[Proposition 1.1 of Appendix A]{Si86},  the $j$-invariant $j(\fE)$ is equal to $0$. An automorphism $\Gs\in \Aut(\fE/\F_q)$ is given by
\begin{equation}\label{eq:6}\Gs(x)=u^2x+s,\quad \Gs(y)=u^3y+u^2sx+t,\end{equation}
where $u,s,t\in\F_q$ satisfy $u^3=1$, $s^4+s=0$ and $t^2+t+s^6=0$ (see \cite[Proposition 1.2 of Appendix A]{Si86}). This gives $24$ solutions $(u,s,t)\in\F_q^*\times\F_q\times\F_q$. If we denote by $\Gs_{u,s,t}$ the automorphism given in \eqref{eq:6}, we have
\[\Aut(\fE/\F_q)=\{\Gs_{u,s,t}:\; u\in\F_4^*,\; s\in\F_4,\; t\in\F_4,\; t^2+t=s^3\}\]
and $|\Aut(\fE/\F_q)|=24$.
\end{proof}

\begin{lemma}\label{lem:2.8}
Let $q$ is an even power of $3$ and let $\fE$ be an elliptic curve over $\F_q$ defined by an equation $y^2=x^3+\Ga x$ for some  $\Ga\in \F_q^*$. Then \[|\Aut(\fE/\F_q)|=\left\{\begin{array}{ll} 4 &\mbox{if $-\Ga$ is a non-square in $\F_q^*$;}\\
12&\mbox{otherwise.}
\end{array}\right.\]
\end{lemma}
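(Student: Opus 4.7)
The plan is to mimic the proof of Lemma \ref{lem:2.7}: parametrize $\Aut(\fE/\bar{\F}_q)$ explicitly, then count how many of its elements are defined over $\F_q$.

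First I would verify that $j(\fE)=0=1728$ in characteristic $3$. For $y^2=x^3+\Ga x$ one has $a_1=a_2=a_3=a_6=0$ and $a_4=\Ga$, which gives $c_4=-48\Ga\equiv 0\pmod 3$ while $\Delta=-64\Ga^3\ne 0$. Hence by Lemma \ref{lem:2.5}(iv), $|\Aut(\fE)|=12$ over $\bar{\F}_q$.

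Next, by \cite[Proposition 1.2 of Appendix A]{Si86}, every automorphism of $\fE$ over $\bar{\F}_q$ has the form $\sigma\colon(x,y)\mapsto (u^2x+r,\,u^3y+u^2sx+t)$ with $u\in\bar{\F}_q^{*}$ and $r,s,t\in\bar{\F}_q$. Substituting into the Weierstrass equation and using $y^2=x^3+\Ga x$ to reduce, I would match coefficients of $X^iY^j$. In characteristic $3$ the $Y$ and $XY$ coefficients force $t=0$ and $s=0$; the coefficient of $X$ then reduces to $u^6\Ga=\Ga u^2$, giving $u^4=1$ (since $\Ga\ne 0$); and the constant term becomes $r^3+\Ga r=0$, i.e.\ $r(r^2+\Ga)=0$. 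Thus
\[
\Aut(\fE/\bar{\F}_q)=\{\sigma_{u,r}:\;u\in\bar{\F}_q^{*},\;r\in\bar{\F}_q,\;u^4=1,\;r(r^2+\Ga)=0\},
\]
which is a group of order $4\cdot 3=12$, consistent with Lemma \ref{lem:2.5}(iv).

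Finally I would count the pairs $(u,r)\in\F_q\times\F_q$ satisfying these conditions. Since $q=3^a$ with $a$ even, $q=9^{a/2}\equiv 1\pmod 4$, so $-1$ is a square in $\F_q^{*}$ and all four roots of $u^4=1$ lie in $\F_q$. The root $r=0$ is always $\F_q$-rational, while $r^2=-\Ga$ contributes two more $\F_q$-rational solutions if and only if $-\Ga$ is a square in $\F_q^{*}$. Multiplying, one obtains $|\Aut(\fE/\F_q)|=4\cdot 1=4$ if $-\Ga$ is a non-square, and $4\cdot 3=12$ otherwise. The only mildly delicate step is the coefficient comparison in characteristic $3$, where the vanishing of the middle binomial terms of $(u^2x+r)^3$ decouples the constraint on $r$ from the $X$-coefficient equation; apart from this the argument is a routine root count.
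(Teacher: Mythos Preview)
Your proposal is correct and follows essentially the same approach as the paper: both parametrize automorphisms by $\sigma(x)=u^2x+s$, $\sigma(y)=u^3y$ with constraints $u^4=1$ and $s^3+\alpha s=0$, then count $\F_q$-rational solutions using $q\equiv 1\pmod 4$ and the square/non-square dichotomy for $-\alpha$. The only cosmetic difference is that you derive the reduced form $s=t=0$ by coefficient comparison in characteristic $3$, whereas the paper simply invokes \cite[Proposition 1.2 of Appendix A]{Si86} for that form directly.
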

\begin{proof} By \cite[Proposition 1.1 of Appendix A]{Si86},  the $j$-invariant $j(\fE)$ is equal to $0$. An automorphism $\Gs\in \Aut(\fE/\F_q)$ is given by
\begin{equation}\label{eq:7}\Gs(x)=u^2x+s,\quad \Gs(y)=u^3y,\end{equation}
where $u,s\in\F_q$ satisfy $u^4=1$, $s^3+\Ga s=0$ (see \cite[Proposition 1.2 of Appendix A]{Si86}). This gives $4$ solutions (or $12$ solutions, respectively) $(u,s)\in\F_q^*\times\F_q^*$ if $-\Ga$ is a non-square (or square, respectively). If we denote by $\Gs_{u,s}$ the automorphism given in \eqref{eq:7}, we have
\[\Aut(\fE/\F_q)=\{\Gs_{u,s}:\; u\in\F_9^*,\; u^4=1,\; s\in\F_q^*,\; s^3+\Ga s=0\}.\]
The proof is completed.
\end{proof}

\begin{lemma}\label{lem:2.9}
Let $q$ is an even power of a prime $p$ with $p\neq 2,3 $ and $p\equiv 2\pmod{3}$ and let $\fE$ be an elliptic curve over $\F_q$ defined by an equation $y^2=x^3+\Ga$ for some $\Ga\in\F_q^*$. Then $|\Aut(\fE/\F_q)|=6$.
\end{lemma}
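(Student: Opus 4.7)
The plan is to mirror the structure of the proofs of Lemmas 2.7 and 2.8: first identify the normal form for automorphisms of an elliptic curve with $j$-invariant $0$ in characteristic $\neq 2,3$ using \cite[Proposition 1.2 of Appendix A]{Si86}, and then count how many of these automorphisms are actually defined over $\F_q$ under the stated arithmetic hypotheses on $q$.

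First I would invoke \cite[Proposition 1.1 of Appendix A]{Si86} to observe that $j(\fE)=0$, so that by Lemma \ref{lem:2.5}(iii) the full geometric automorphism group $\Aut(\fE)$ has order $6$. Then by the explicit description in \cite[Proposition 1.2 of Appendix A]{Si86}, every $\Gs\in\Aut(\fE)$ is of the shape
\[
\Gs(x)=u^{2}x,\qquad \Gs(y)=u^{3}y,
\]
with $u\in\overline{\F}_q^{\,*}$ satisfying $u^{6}=1$. Consequently $\Gs\in\Aut(\fE/\F_q)$ if and only if $u\in\F_q^{*}$, and the problem reduces to showing that $\F_q^{*}$ contains all six sixth roots of unity, i.e.\ that $6\mid q-1$.

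The key arithmetic step is to verify $6\mid q-1$ from the hypotheses. Since $p\neq 2$, $q$ is odd and so $2\mid q-1$. For divisibility by $3$, the assumption $p\equiv 2\pmod{3}$ gives $p^{2}\equiv 1\pmod{3}$, and since $a$ is even we may write $q=p^{a}=(p^{2})^{a/2}\equiv 1\pmod{3}$. Combining these, $6\mid q-1$, so $\F_q^{*}$ contains exactly six sixth roots of unity. Writing $\Gs_{u}$ for the automorphism corresponding to $u$, we conclude
\[
\Aut(\fE/\F_q)=\{\Gs_{u}:u\in\F_q^{*},\ u^{6}=1\},
\]
which has order $6$.

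I do not expect any real obstacle: the only nontrivial point is the congruence manipulation showing $6\mid q-1$, which is forced by the combined hypotheses that $p$ is an odd prime with $p\equiv 2\pmod 3$ and that $a$ is even. The role of the hypothesis $p\equiv 2\pmod 3$ is precisely to ensure (together with $a$ even) that $\F_q$ is large enough to realize all geometric automorphisms over $\F_q$; without it one would only get a proper subgroup of the geometric $\Aut(\fE)$.
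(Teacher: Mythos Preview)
Your proof is correct and follows essentially the same approach as the paper: identify $j(\fE)=0$, invoke the explicit form $\sigma(x)=u^2x$, $\sigma(y)=u^3y$ with $u^6=1$ from \cite[Proposition~1.2 of Appendix~A]{Si86}, and count the sixth roots of unity in $\F_q^*$ using $q\equiv 1\pmod 6$. Your argument in fact spells out the congruence verification $6\mid q-1$ more carefully than the paper, which simply asserts it.
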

\begin{proof} In this case, the $j$-invariant $j(\fE)$ is equal to $0$. An automorphism $\Gs\in \Aut(\fE/\F_q)$ is given by
\begin{equation}\label{eq:xm2}\Gs(x)=u^2x,\quad \Gs(y)=u^3y,\end{equation}
where $u\in\F_q^*$ satisfies $u^6=1$ (see \cite[Proposition 1.2 of Appendix A]{Si86}). This gives $6$ solutions in $\F_q$ since $q\equiv 1\pmod{6}$ in this case. If we denote by $\Gs_u$ the automorphism given in \eqref{eq:xm2}, we have
\[\Aut(\fE/\F_q)=\{\Gs_u:\; u\in\F_{p^2}^*,\; u^6=1\}.\]
The proof is completed.
\end{proof}

\begin{lemma}\label{lem:2.10}
Let $q$ is an even power of a prime $p$ with $p\neq 2,3 $ and $p\equiv 3\pmod{4}$ and let $\fE$ be an elliptic curve over $\F_q$ defined by an equation $y^2=x^3+\Ga x$ for some $\Ga\in\F_q^*$. Then $|\Aut(\fE/\F_q)|=4$.
\end{lemma}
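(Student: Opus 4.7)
The plan is to mimic the calculations used in Lemmas~\ref{lem:2.7}--\ref{lem:2.9}, applied now to the case $j(\fE)=1728$ in characteristic different from $2$ and $3$. First I would invoke \cite[Proposition 1.1 of Appendix A]{Si86} to identify the given equation $y^2=x^3+\alpha x$ as a short Weierstrass model with $j$-invariant $1728$. By Lemma~\ref{lem:2.5}(ii), the full geometric automorphism group $\Aut(\fE)$ over $\bar{\F}_q$ then has order exactly $4$, so it suffices to show that every element of $\Aut(\fE)$ is actually defined over $\F_q$.

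Next I would invoke \cite[Proposition 1.2 of Appendix A]{Si86}, which tells us that for a curve of this form in characteristic $\neq 2,3$, every automorphism $\sigma\in\Aut(\fE)$ has the shape
\[\sigma(x)=u^2x,\qquad \sigma(y)=u^3y,\]
where $u\in\bar{\F}_q^*$ satisfies $u^4=1$ (the constraint $\alpha=u^4\alpha$ reduces to $u^4=1$). Thus $\sigma$ is $\F_q$-rational if and only if $u\in\F_q$, and the problem becomes: show that all four fourth roots of unity lie in $\F_q$.

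The hypothesis that $q$ is an \emph{even} power of $p$ is what delivers this: writing $q=p^{2m}$, we have $q\equiv 1\pmod 4$ because $p^2\equiv 1\pmod 4$ for any odd prime $p$ (this is where the assumption $p\equiv 3\pmod 4$ is actually not strictly needed for the count, but matches the hypothesis of the lemma). Consequently $4\mid q-1$, so $\F_q^*$ contains a cyclic subgroup of order $4$, and all solutions of $u^4=1$ lie in $\F_q$. Denoting by $\sigma_u$ the automorphism above, I would conclude
\[\Aut(\fE/\F_q)=\{\sigma_u:\; u\in\F_{p^2}^*,\; u^4=1\},\]
which has cardinality $4$.

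The only mild obstacle is bookkeeping: verifying that the formulas from Silverman really force $u^4=1$ (and not extra conditions on translation parameters as in Lemmas~\ref{lem:2.7} and~\ref{lem:2.8}, where the curves had additional $a_i$ coefficients). Since our model has $a_1=a_2=a_3=a_6=0$, the constraints collapse to the single equation $u^4=1$, so no spurious translations arise. Everything else is the same pattern as the proofs of Lemmas~\ref{lem:2.8} and~\ref{lem:2.9}.
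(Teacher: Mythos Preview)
Your proposal is correct and follows essentially the same route as the paper: compute $j(\fE)=1728$, invoke Silverman's description of the automorphisms as $\sigma_u(x)=u^2x$, $\sigma_u(y)=u^3y$ with $u^4=1$, and use $q\equiv 1\pmod 4$ to conclude all four solutions lie in $\F_q$. Your version is slightly more detailed (you explicitly cite Lemma~\ref{lem:2.5}(ii) and remark that the hypothesis $p\equiv 3\pmod 4$ is not actually used in the count), but the argument is the same.
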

\begin{proof} In this case, the $j$-invariant $j(\fE)$ is equal to $1728$. An automorphism $\Gs\in \Aut(\fE/\F_q)$ is given by
\begin{equation}\label{eq:9}\Gs(x)=u^2x,\quad \Gs(y)=u^3y,\end{equation}
where $u\in\F_q^*$ satisfies $u^4=1$ (see \cite[Proposition 1.2 of Appendix A]{Si86}). This gives $4$ solutions in $\F_q$ since $q\equiv 1\pmod{4}$ in this case. If we denote by $\Gs_u$ the automorphism given in \eqref{eq:9}, we have
\[\Aut(\fE/\F_q)=\{\Gs_u:\; u\in\F_{p^2}^*, \; u^4=1\}.\]
The proof is completed.
\end{proof}

\subsection{Algebraic geometry codes}
For the construction of algebraic geometry codes,  the reader may refer to \cite{NX01,TV91,TVN90} for more details.
Let $F/\F_q$ be a function field with the full constant field $\F_q$. Let $\mP=\{P_1,\dots,P_n\}$ be a set of $n$ distinct rational places of $F$. For a divisor $G$ of $F$ with $0<\deg(G)<n$ and $\supp(G)\cap\mP=\emptyset$, the algebraic geometry code associated with $D$ and $G$ is defined to be
\begin{equation}\label{eq:10}
C(\mP,G):=\{(f(P_1),\dots,f(P_n)): \; f\in\mL(G)\}.
\end{equation}
Then $C(\mP,G)$ is an $[n,k,d]$-linear code with dimension $k=\dim_{\F_q}(G)$ and minimum distance $d\ge n-\deg(G)$.
If $V$ is a subspace of $\mL(G)$, we can define a subcode of $C(\mP,G)$ by
\begin{equation}\label{eq:11}
C(\mP,V):=\{(f(P_1),\dots,f(P_n)):\; f\in  V\}.
\end{equation}
Then the dimension of  $C(\mP,V)$ is the dimension of the vector space $V$ over $\F_q$ and the minimum distance of  $C(\mP,V)$ is still lower bounded by $n-\deg(G)$.

\section{Maximal and minimal elliptic curves}
In order to construct algebraic geometry codes with good parameters, we usually need function fields over finite fields with many rational places, especially maximal function fields. In this section, we provide explicit maximal elliptic curves which will be used in the latter section.

\begin{lemma}\label{lem:2.11}
Assume that $q$ is a square and $\fE$ is an elliptic curve over $\F_q$.
\begin{itemize}
\item[(i)] If  $\fE/\F_q$ is maximal, then  $\fE$ is maximal over $\F_{q^s}$ if and only if $s$ is odd. Furthermore, $\fE$ is minimal over $\F_{q^s}$ if and only if $s$ is even.
\item[(ii)]If  $\fE/\F_q$ is minimal, then  $\fE$ is minimal over $\F_{q^s}$ for all $s\ge 1$.
\end{itemize}
\end{lemma}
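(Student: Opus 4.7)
The plan is to extract everything from the factored $L$-polynomial supplied by Lemma~\ref{lem:2.3}. Write
\[
L(\fE/\F_q,T)=1-tT+qT^2=(1-\Ga T)(1-\Gb T),
\]
with $|\Ga|=|\Gb|=\sqrt{q}$ and $\Ga+\Gb=t$, $\Ga\Gb=q$. The standard relationship between the zeta function and point counts gives
\[
|\fE(\F_{q^s})|=q^s+1-(\Ga^s+\Gb^s)\qquad\text{for every }s\ge 1,
\]
which one may either quote from \cite[Chapter 5]{Si86} or derive by taking the logarithmic derivative of $Z(\fE/\F_q;T)=L(\fE/\F_q,T)/((1-T)(1-qT))$ and comparing coefficients.

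For part (i), maximality of $\fE/\F_q$ means $t=-2\sqrt q$, so $L(\fE/\F_q,T)=(1+\sqrt q\,T)^2$, giving $\Ga=\Gb=-\sqrt q$. Substituting,
\[
|\fE(\F_{q^s})|=q^s+1-2(-\sqrt q)^{s}=q^s+1-2(-1)^s q^{s/2}.
\]
Since $q$ is a square, $\sqrt{q^s}=q^{s/2}$ lies in $\ZZ$. If $s$ is odd, the right-hand side equals $q^s+1+2\sqrt{q^s}$, so $\fE$ is maximal over $\F_{q^s}$; if $s$ is even, it equals $q^s+1-2\sqrt{q^s}$, so $\fE$ is minimal over $\F_{q^s}$. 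This handles both statements in (i).

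For part (ii), minimality of $\fE/\F_q$ means $t=2\sqrt q$, whence $L(\fE/\F_q,T)=(1-\sqrt q\,T)^2$ and $\Ga=\Gb=\sqrt q$. Then
\[
|\fE(\F_{q^s})|=q^s+1-2(\sqrt q)^{s}=q^s+1-2\sqrt{q^s}
\]
for every $s\ge 1$, so $\fE/\F_{q^s}$ is minimal for all $s$.

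There is no real obstacle here — the only things to be careful about are (a) invoking the point-count formula $|\fE(\F_{q^s})|=q^s+1-(\Ga^s+\Gb^s)$ correctly as a consequence of Lemma~\ref{lem:2.3}, and (b) checking that $\sqrt{q^s}\in\ZZ$ so that the definitions of ``maximal'' and ``minimal'' over $\F_{q^s}$ make sense, which is automatic because $q$ is already a square. Everything else is bookkeeping of signs of $(-1)^s$.
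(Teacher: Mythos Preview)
Your argument is correct and is essentially the same as the paper's: both exploit that the Frobenius eigenvalues of $\fE/\F_q$ are $\Ga=\Gb=\mp\sqrt q$ in the maximal/minimal cases and then pass to $\F_{q^s}$ via $\Ga^s,\Gb^s$. The only cosmetic difference is that the paper states the conclusion in terms of the $L$-polynomial over $\F_{q^s}$, namely $L(\fE/\F_{q^s},T)=(1-(\mp\sqrt q)^sT)^2$, whereas you translate this one step further into the point count $|\fE(\F_{q^s})|=q^s+1-2(\mp\sqrt q)^s$; the content is identical.
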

\begin{proof} Since $\fE/\F_q$ is maximal, its $L$-polynomial is $L(\fE/\F_q,T)=1+2\sqrt{q}T+qT=(1+\sqrt{q}T)^2$. Thus, the $L$-polynomial over $\F_{q^s}$ is $L(\fE/\F_{q^s},T)=(1-(-\sqrt{q})^sT)^2$ (see \cite[Section 5.1]{St09}). Since $\fE$ is maximal (or minimal, respectively) over $\F_{q^s}$ if and only is its zeta function is $(1+q^{s/2}T)^2$ (or $(1-q^{s/2}T)^2$, respectively). The desired result of part (i) follows.

The proof of part (ii) is similar and we skip the detail.
\end{proof}

An elliptic curve $\fE/\F_q$ is supersingular if its number of $\F_q$-rational points is equal $1+q+t$ for an integer $t$ divisible by the characteristic of $\F_q$. Two elliptic curves over $\F_q$ are said isogenous if they have the same number of $\F_q$-rational points.
\begin{lemma}\label{lem:2.12}(see \cite{Wa69})
The isogeny classes of elliptic curves over $\F_q$ for $q=p^a$ are in one-to-one correspondence with the rational integers $t$ having $|t|\le 2\sqrt{q}$
and satisfying some one of the following conditions:
\begin{itemize}
\item[(i)] $(t,p)=1$;
\item[(ii)] If $a$ is even: $t=\pm 2\sqrt{q}$;
\item[(iii)] If $a$ is even and $p\not \equiv 1\pmod{3}$: $t=\pm \sqrt{q}$;
\item[(iv)] If $a$ is odd and $p=2$ or $3$: $t=\pm p^{\frac{a+1}{2}}$;
\item[(v)] If either (i) $a$ is odd or (ii) $a$ is even and $p\not \equiv 1\pmod{4}: t=0.$
\end{itemize}
The first of these is not supersingular; the rest are supersingular.
\end{lemma}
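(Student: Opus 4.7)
The plan is to invoke the Honda--Tate classification, which gives a bijection between isogeny classes of elliptic curves over $\F_q$ and conjugacy classes of Weil $q$-numbers $\pi$ (algebraic integers with $|\iota(\pi)|=\sqrt q$ for every embedding $\iota$). Under this bijection, the class corresponding to $\pi$ has Frobenius satisfying $\pi^2-t\pi+q=0$ with $t=\pi+\bar\pi\in\ZZ$. The Hasse bound $|t|\le 2\sqrt q$ (Lemma \ref{lem:2.2}) gives the overall range, so the task reduces to deciding which integers $t$ in $[-2\sqrt q,2\sqrt q]$ actually arise.

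First I would split according to whether $\gcd(t,p)=1$ (ordinary) or $p\mid t$ (supersingular). In the ordinary case, the Newton polygon of $X^2-tX+q$ has slopes $0$ and $1$, and classical Deuring lifting produces a CM elliptic curve over a number field with good reduction realizing every such $t$ in the Hasse range; this settles case (i). In the supersingular case, the geometric endomorphism algebra of $\fE$ is forced to be the quaternion algebra $B_{p,\infty}$ ramified exactly at $p$ and $\infty$, and the Honda--Tate criterion becomes a local compatibility: the field $K=\QQ(\pi)$ (or $\QQ$ itself, if $\pi\in\ZZ$) must embed into $B_{p,\infty}$, equivalently the local invariants $\mathrm{inv}_v(B_{p,\infty})=\sum_{w\mid v}[K_w\!:\!\QQ_v]\cdot\mathrm{inv}_w(\mathrm{End}^0)$ must match at every place $v$, with nonzero contributions only at $p$ and $\infty$.

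Next I would go through the candidate supersingular traces one by one. The value $t=\pm2\sqrt q$ forces $\pi=\pm\sqrt q\in\QQ$, which is an integer iff $a$ is even, giving case (ii). The value $t=\pm\sqrt q$ gives $K=\QQ(\zeta_3)$; the local condition at $p$ requires $p$ inert or ramified in $K$, i.e.\ $p\not\equiv 1\pmod 3$, together with $a$ even for $\sqrt q\in\ZZ$, giving case (iii). The value $t=0$ leads to $K=\QQ(\sqrt{-p})$ when $a$ is odd (always admissible) and to $K=\QQ(i)$ when $a$ is even, where the local condition forces $p\not\equiv 1\pmod 4$; this is case (v). The exotic values $t=\pm p^{(a+1)/2}$ with $a$ odd produce $K=\QQ(\sqrt p)$ and the local invariants at $p$ only balance when $p=2$ or $3$, which is case (iv). For any other $t$ with $p\mid t$ inside the Hasse interval, the same local computation shows the Brauer class of the resulting algebra fails to equal that of $B_{p,\infty}$, so no elliptic curve exists; this gives the converse direction. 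The final sentence (that case (i) is ordinary and the rest are supersingular) is immediate from the definition together with the observation that in cases (ii)--(v) one checks $p\mid t$.

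The main obstacle is the bookkeeping of local invariants at $p$: verifying in each case that the slope decomposition of $\pi$ in $K\otimes\QQ_p$ produces the correct ramification data for $B_{p,\infty}$, and conversely ruling out all other $p$-divisible integers $t$. Once this local analysis is done carefully (which is the heart of Waterhouse's theorem), the existence and non-existence statements fall out directly, and the supersingular/ordinary dichotomy in the last sentence is a formal consequence.
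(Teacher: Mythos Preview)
The paper does not prove this lemma at all: it is stated with the citation ``(see \cite{Wa69})'' and immediately used as a black box. There is therefore no proof in the paper to compare your proposal against.

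That said, your outline is essentially the strategy Waterhouse himself uses (in the language of Honda--Tate theory, which Waterhouse was in part developing). The bijection with Weil $q$-numbers, the ordinary/supersingular split according to $\gcd(t,p)$, and the case analysis via the local splitting behaviour of $p$ in $\QQ(\pi)$ are all correct in spirit. One small point: for the ordinary case you do not strictly need Deuring lifting; Honda--Tate already guarantees existence once you check that the endomorphism algebra has the right dimension over its centre, and for ordinary $\pi$ the centre $\QQ(\pi)$ is imaginary quadratic with $[\mathrm{End}^0:\QQ(\pi)]=1$, so the abelian variety is automatically one-dimensional. Also, in case (iv) with $a$ odd and $t=\pm p^{(a+1)/2}$, the field $\QQ(\pi)$ is $\QQ(\sqrt{-p})$ or $\QQ(\sqrt{-3p})$ depending on the sign and parity, not $\QQ(\sqrt p)$ as you wrote; the restriction to $p\in\{2,3\}$ comes from requiring the endomorphism algebra to be a quaternion algebra of the correct type over the centre. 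These are details rather than gaps, and your overall plan would go through with the bookkeeping done carefully.
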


By making use of Lemma \ref{lem:2.12} or counting points directly, we provide some examples of maximal elliptic curves with automorphism group determined in this section.

\begin{lemma}\label{lem:2.13} For any even $a$, there exists a maximal elliptic curve $\fE$ over $\F_{2^a}$ defined by an equation $y^2+y=x^3+\Gg$ for some $\Gg\in\F_{2^a}$ such that it has an automorphism group of size $24$, i.e., $|\Aut(\fE/\F_{p^a})|=24$.
\end{lemma}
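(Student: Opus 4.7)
The plan is to invoke Lemma~\ref{lem:2.7} directly, which guarantees that any curve of the form $\fE_\Gg : y^2 + y = x^3 + \Gg$ over $\F_{2^a}$ (which is automatically nonsingular, since in characteristic two $\Delta = -27 \neq 0$ for this model) satisfies $|\Aut(\fE_\Gg/\F_{2^a})| = 24$. So only the maximality has to be arranged, and I would split on $a \bmod 4$.

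\textbf{Case $a \equiv 2 \pmod 4$.} Take $\Gg = 0$: the curve $\fE_0: y^2+y=x^3$ is the Hermitian curve over $\F_4$, and a direct count of its affine points gives $|\fE_0(\F_4)| = 9 = 1 + 4 + 2\sqrt{4}$, so $\fE_0$ is maximal over $\F_4$. Writing $a = 2s$ with $s$ odd, Lemma~\ref{lem:2.11}(i) then yields that $\fE_0$ is still maximal over $\F_{4^s} = \F_{2^a}$, and we are done.

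\textbf{Case $a \equiv 0 \pmod 4$.} Here Lemma~\ref{lem:2.11}(i) instead forces $\fE_0/\F_{2^a}$ to be \emph{minimal}, so the plan is to flip the Frobenius sign by tuning the constant $\Gg$. A short Artin--Schreier calculation, using that $y^2+y = z$ has two solutions in $\F_{2^a}$ when $\Tr_{\F_{2^a}/\F_2}(z) = 0$ and none otherwise, yields
\[
N_\Gg := |\fE_\Gg(\F_{2^a})| = 1 + 2^a + (-1)^{\Tr(\Gg)} S_0, \qquad S_0 := \sum_{x \in \F_{2^a}} (-1)^{\Tr(x^3)},
\]
where $\Tr$ stands for $\Tr_{\F_{2^a}/\F_2}$ and the factorization uses the additivity $\Tr(x^3+\Gg)=\Tr(x^3)+\Tr(\Gg)$. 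Specializing to $\Gg = 0$ together with the minimality $N_0 = 1 + 2^a - 2\sqrt{2^a}$ gives $S_0 = -2\sqrt{2^a}$. Therefore any $\Gg \in \F_{2^a}$ with $\Tr(\Gg) = 1$ (such $\Gg$ exists because the trace is surjective onto $\F_2$) yields $N_\Gg = 1 + 2^a + 2\sqrt{2^a}$, proving maximality.

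Conceptually the two cases realize, inside the single family $\{\fE_\Gg\}_{\Gg \in \F_{2^a}}$, the two members of an isogeny class of traces $\pm 2\sqrt{q}$ predicted by Lemma~\ref{lem:2.12}(ii); the additive shift $\Gg \mapsto \Gg + c^2+c$ collapses this family to exactly two $\F_{2^a}$-isomorphism classes, which are interchanged in point count by the trace of $\Gg$. I do not foresee any genuine obstacle: the automorphism count is entirely supplied by Lemma~\ref{lem:2.7}, and the only mildly delicate step is bookkeeping the parity of $s$ in Lemma~\ref{lem:2.11} to decide which residue of $a$ modulo~$4$ needs the trace trick.
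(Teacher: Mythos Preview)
Your proof is correct and follows essentially the same approach as the paper: split on $a \bmod 4$, use $\Gg=0$ with Lemma~\ref{lem:2.11}(i) when $a\equiv 2\pmod 4$, and in the other case twist by a $\Gg$ outside the Artin--Schreier image to flip the minimal curve to a maximal one, then quote Lemma~\ref{lem:2.7} for the automorphism count. The only cosmetic difference is that the paper phrases the twist step as the complementary count $N+N'=2+2q$ rather than your equivalent character-sum identity $N_\Gg=1+2^a+(-1)^{\Tr(\Gg)}S_0$; your condition $\Tr(\Gg)=1$ is exactly the paper's $\Gg\notin\{\Gb^2+\Gb:\Gb\in\F_q\}$.
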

\begin{proof} Let $q=2^a$. First assume that $a\equiv 2\pmod{4}$.  Consider the elliptic curve $\fE$ defined by $y^2+y=x^3$. It is straightforward to verify that it has $9=1+4+2\sqrt{4}$ rational points over $\F_4$ and hence it is maximal over $\F_4$. Then $\fE$ is  maximal  over $\F_{2^a}$ for $a\equiv 2\pmod{4}$ by Lemma \ref{lem:2.11}(i).

Now let $a\equiv 0\pmod{4}$. By Lemma \ref{lem:2.11}(i), the elliptic curve $\fE$ is minimal over $\F_{2^4}$. Then  $\fE$ is minimal over $\F_q$ for  $a\equiv 0\pmod{4}$ by Lemma \ref{lem:2.6}(ii). We claim that the twisted curve $\fE^{\prime}$  defined by $y^2+y+\Gg=x^3$ for some $\Gg \in \F_{q}\setminus \{\Gb^2+\Gb:\Gb\in \F_q\}$ is maximal. Indeed, let $N$ and $N'$ denote the number of $\F_q$-rational points of $\fE$ and $\fE'$, respectively.
For any $\Ga\in \F_q$, it is easy to see that one of $y^2+y=\Ga^3$ and $y^2+y+\Gg=\Ga^3$ has no solutions and other has two solutions. This implies that $N+N'=2+2q$, where the point $O$ at infinity is counted twice. Since $\fE$ is minimal, we have $N=1+q-2\sqrt{q}$. Hence, $N'=1+q+2\sqrt{q}$, i.e., $\fE'$ is maximal over $\F_q$.

By Lemma \ref{lem:2.7},  we have $|\Aut(\fE/\F_q)|=|\Aut(\fE'/\F_q)|=24$.
\end{proof}

\begin{lemma}\label{lem:2.14} Let $p\equiv 3\pmod{4}$ be a prime. Then for any even $a$, there  exists a maximal elliptic curve $\fE$ over $\F_{p^a}$ defined by an equation $y^2=x^3+ \theta^2 x$ for some $\theta\in\F_{p^a}^*$ such that
\[|\Aut(\fE/\F_{p^a})|=\left\{\begin{array}{ll} 4 &\mbox{if $p\neq 3$;}\\
12&\mbox{if $p=3$.}
\end{array}\right.\]
\end{lemma}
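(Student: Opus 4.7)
The plan is to mirror the strategy of Lemma \ref{lem:2.13}: start with the simple base curve $\fE_0: y^2=x^3+x$ over $\F_p$, determine when it is maximal over $\F_{p^a}$, and pass to a quadratic twist when it fails to be. First I would count $\F_p$-rational points of $\fE_0$: since $p\equiv 3\pmod 4$, the quadratic character $\chi$ satisfies $\chi(-1)=-1$, so pairing $x$ with $-x$ in $f(x)=x(x^2+1)$ gives $\chi(f(-x))=-\chi(f(x))$, hence $\sum_{x\in\F_p}\chi(f(x))=0$ and $|\fE_0(\F_p)|=p+1$. Thus $L(\fE_0/\F_p;T)=1+pT^2$ with reciprocal roots $\pm i\sqrt{p}$; squaring them gives $L(\fE_0/\F_{p^2};T)=(1+pT)^2$, so $\fE_0$ is maximal over $\F_{p^2}$.

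From here Lemma \ref{lem:2.11}(i) controls the behaviour under further constant field extension. When $a\equiv 2\pmod 4$, the degree $a/2$ of $\F_{p^a}/\F_{p^2}$ is odd and $\fE_0$ remains maximal, so I would simply take $\theta=1$. When $a\equiv 0\pmod 4$, the same lemma gives $\fE_0$ minimal over $\F_{p^a}$, and I would pick any non-square $d\in\F_{p^a}^*$, set $\theta=d$, and work with $\fE:y^2=x^3+d^2x$. The substitution $x=d\eta$ rewrites this equation as $y^2=d^3(\eta^3+\eta)$, and since $d^2$ is a square while $d$ is not, $d^3$ is also a non-square. A fibrewise count over $\eta\in\F_{p^a}$ then shows that, whether $\eta^3+\eta$ is zero, a nonzero square, or a non-square, the combined number of $y$-values on $\fE_0$ and $\fE$ above $\eta$ is always $2$; adding the two points at infinity gives $|\fE_0(\F_{p^a})|+|\fE(\F_{p^a})|=2p^a+2$, forcing $\fE$ to be maximal.

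For the automorphism count I would simply invoke the earlier structural lemmas. If $p\ne 3$, Lemma \ref{lem:2.10} immediately yields $|\Aut(\fE/\F_{p^a})|=4$. If $p=3$, Lemma \ref{lem:2.8} gives $4$ or $12$ according to whether $-\theta^2$ is a non-square or a square in $\F_{p^a}$; since $\theta^2$ is a square and $-1\in\F_{p^2}\subseteq\F_{p^a}$ is a square (because $4\mid p^2-1$ for odd $p$), $-\theta^2$ is a square and hence $|\Aut|=12$. I expect the quadratic-twist step in the case $a\equiv 0\pmod 4$ to be the main technical hurdle: the pleasant cancellation $|\fE_0(\F_{p^a})|+|\fE(\F_{p^a})|=2p^a+2$ rests on $d^3$ (not merely $d$) being a non-square and on the fiber above each root of $\eta^3+\eta$ being handled correctly.
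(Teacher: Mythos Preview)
Your proposal is correct and follows essentially the same route as the paper: both start from $\fE_0:y^2=x^3+x$, show it is maximal over $\F_{p^2}$ via the $L$-polynomial factorisation $1+pT^2=(1-i\sqrt{p}T)(1+i\sqrt{p}T)$, use Lemma~\ref{lem:2.11}(i) to propagate to $\F_{p^a}$, pass to the quadratic twist $y^2=x^3+\theta^2 x$ with non-square $\theta$ when $a\equiv 0\pmod 4$ using the point-count identity $|\fE_0(\F_{p^a})|+|\fE(\F_{p^a})|=2p^a+2$, and then invoke Lemmas~\ref{lem:2.8} and~\ref{lem:2.10} for the automorphism group. The only cosmetic difference is that you obtain $|\fE_0(\F_p)|=p+1$ directly from the character-sum pairing $x\leftrightarrow -x$, whereas the paper cites supersingularity (Silverman) together with Lemma~\ref{lem:2.12}(v) and verifies $p=3$ by hand; your argument is a bit more self-contained but otherwise the two proofs coincide.
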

\begin{proof} Let $q=p^a$. By Example 4.5 of \cite[Chpater V]{Si86}, the elliptic curve $\fE/\F_p$ defined by $y^2=x^3+x$ is supersingular. If $p\neq 3$, then by Lemma \ref{lem:2.12}(v), $\fE$ has $1+p$ rational points over $\F_p$. This implies that the $L$-polynomial $L(\fE/\F_p,T)$ is equal to $1+pT^2=(1-i\sqrt{p}T)(1+i\sqrt{p}T)$, where $i$ is the imaginary unit. Thus, the $L$-polynomial $L(\fE/\F_{p^2},T)$ is equal to $(1-(i\sqrt{p})^2T)(1-(-i\sqrt{p})^2T)=(1+pT)^2$, i.e, $\fE$ is maximal over $\F_{p^2}$. If $p=3$, then it is straightforward to verify that the curve  $\fE$ is maximal over $\F_9$. Thus, for any prime $p$ with $p\equiv 3\pmod{4}$, the elliptic curve $\fE/\F_{p^2}$ defined by $y^2=x^3+x$ is maximal.

By Lemma \ref{lem:2.11}(i), the elliptic curve $\fE/\F_{p^a}$ defined by $y^2=x^3+x$ is maximal for $a\equiv 2\pmod{4}$. Furthermore, since $-1$ is a square in $\F_q$, by Lemma \ref{lem:2.8},  we have $|\Aut(\fE/\F_q)|=12$ for $p=3$. For $p\neq 3$, it follows from Lemma \ref{lem:2.10} that  $|\Aut(\fE/\F_q)|=4$.

If $a\equiv 0\pmod{4}$, then by Lemma \ref{lem:2.11}(i) the curve $\fE$ is minimal over $\F_{p^a}$. By using the similar way as in the proof of Lemma \ref{lem:2.13}, the twist elliptic curve $\fE_1$  given by $\theta y^2=x^3+x$ for any non-square $\theta \in \F_{q}$ is maximal. By multiplying $\theta^3$ on the both side of the equation and then substituting $\theta x$ by $x$, $\theta^2 y$ by $y$, we get the  elliptic curve $\fE_2$  defined by $y^2=x^3+\theta^2 x$ that is isomorphic to $\fE_1$ over $\F_q$. Hence, it is still maximal over $\F_q$.  For $p\neq 3$, it follows from Lemma \ref{lem:2.10} that  $|\Aut(\fE_2/\F_q)|=4$. Furthermore, since $-1$ and $\theta^2$ are squares in $\F_q$, by Lemma \ref{lem:2.8},  we have $|\Aut(\fE_2/\F_q)|=12$ for $p=3$.
\end{proof}

\begin{lemma}\label{lem:2.15} Let $p\neq 2$ and $p\equiv 2\pmod{3}$ be an odd prime. Then for any even $a$, there  exists a maximal elliptic curve over $\F_{p^a}$ defined by an equation $y^2=x^3+ \theta^3$ for some $\theta\in\F_{p^a}^*$ such that it has an automorphism group of size $6$, i.e., $|\Aut(\fE/\F_{p^a})|=6$.
\end{lemma}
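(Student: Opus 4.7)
The plan is to mirror the two-step strategy used in Lemmas \ref{lem:2.13} and \ref{lem:2.14}: first produce an appropriate elliptic curve of the right shape over $\F_p$, track its $L$-polynomial up the tower $\F_p\subset\F_{p^2}\subset\F_{p^a}$, and, whenever the lift turns out to be minimal rather than maximal, replace it by a suitable twist of the same shape $y^2=x^3+\theta^3$.

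The first step is to take $\fE_0:y^2=x^3+1$ over $\F_p$. Since $p\equiv 2\pmod{3}$, $\gcd(3,p-1)=1$ and the cubing map is a bijection on $\F_p$; combined with $\sum_{u\in\F_p}\chi_2(u+1)=0$ (where $\chi_2$ is the quadratic character of $\F_p$), this yields $|\fE_0(\F_p)|=1+p$ and hence $L(\fE_0/\F_p,T)=1+pT^2$ with Frobenius eigenvalues $\pm i\sqrt{p}$. Then $L(\fE_0/\F_{p^2},T)=(1+pT)^2$, so $\fE_0$ is maximal over $\F_{p^2}$. Lemma \ref{lem:2.11}(i) then shows $\fE_0$ is maximal over $\F_{p^a}$ for $a\equiv 2\pmod 4$ and minimal for $a\equiv 0\pmod 4$. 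In the first case, I take $\theta=1$ and conclude by Lemma \ref{lem:2.9}.

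For the remaining case $a\equiv 0\pmod 4$, the plan is to pass to a quadratic twist, exactly as in the proofs of Lemmas \ref{lem:2.13} and \ref{lem:2.14}. A direct computation from the standard quadratic-twist formula for Weierstrass equations (or equivalently, the substitution $x=\theta u$ in $y^2=x^3+\theta^3$) identifies $\fE_\theta:y^2=x^3+\theta^3$ as the quadratic twist of $\fE_0$ by $\theta\in\F_q^*$. Using $\chi_2(\theta^3)=\chi_2(\theta)$ and partitioning the affine points, one obtains the character-sum identity
\[|\fE_0(\F_q)|+|\fE_\theta(\F_q)|=2q+2+\bigl(1+\chi_2(\theta)\bigr)\sum_{u\in\F_q}\chi_2(u^3+1),\]
so for any non-square $\theta\in\F_q^*$ the right-hand side collapses to $2q+2$. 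Combined with the minimal count $|\fE_0(\F_q)|=1+q-2\sqrt{q}$, this gives $|\fE_\theta(\F_q)|=1+q+2\sqrt{q}$, i.e., $\fE_\theta$ is maximal over $\F_q$.

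Finally, since $\theta^3\in\F_q^*$, Lemma \ref{lem:2.9} applies to $\fE_\theta$ and gives $|\Aut(\fE_\theta/\F_{p^a})|=6$, completing the proof. The only delicate point is to recognize that the family $y^2=x^3+\theta^3$ demanded by the lemma statement is exactly the quadratic-twist family of $y^2=x^3+1$ (and not some more exotic sextic twist, of which there are six in total for $j$-invariant $0$); once this is pinned down, every remaining step is either a character-sum calculation or a direct invocation of an earlier lemma.
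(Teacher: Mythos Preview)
Your proof is correct and follows essentially the same route as the paper: start with $y^2=x^3+1$, show it is maximal over $\F_{p^2}$, lift via Lemma~\ref{lem:2.11}, and for $a\equiv 0\pmod 4$ pass to the quadratic twist $y^2=x^3+\theta^3$ by a non-square $\theta$. The only differences are expository---the paper cites Silverman for supersingularity and obtains the twist via the explicit substitution $(x,y)\mapsto(\theta x,\theta^2 y)$ applied to $\theta y^2=x^3+1$, whereas you unpack both steps as character-sum computations.
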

\begin{proof} Let $q=p^a$. By Example 4.4 of \cite[Chpater V]{Si86}, the elliptic curve $\fE/\F_p$ defined by $y^2=x^3+1$ is supersingular. In the same way as we proved in Lemma \ref{lem:2.14}, one can show that $\fE$ is maximal over $\F_{p^2}$.  Thus, for any prime $p$ with $p\equiv 3\pmod{4}$, the elliptic curve $\fE/\F_{p^2}$ defined by $y^2=x^3+1$ is maximal.

By Lemma \ref{lem:2.11}(i), the elliptic curve $\fE/\F_{p^a}$ defined by $y^2=x^3+1$ is maximal for $a\equiv 2\pmod{4}$. It follows from Lemma \ref{lem:2.9} that  $|\Aut(\fE/\F_q)|=6$.

If $a\equiv 0\pmod{4}$, then by Lemma \ref{lem:2.11}(i) the curve $\fE$ is minimal over $\F_{p^a}$. By using the similar arguments as in the proof of Lemma \ref{lem:2.13}, one can show that the twist elliptic curve $\fE_1$  given by $\theta y^2=x^3+1$ for any non-square $\theta \in \F_{q}^*$ is maximal. By multiplying $\theta^3$ on the both side of the equation and then substituting $\theta x$ by $x$, $\theta^2 y$ by $y$, we get the  elliptic curve $\fE_2$  defined by $y^2=x^3+\theta^3 $ that is isomorphic to $\fE_1$ over $\F_q$. Hence, it is still maximal over $\F_q$.  It follows from Lemma \ref{lem:2.9} that  $|\Aut(\fE_2/\F_q)|=6$.
\end{proof}

\section{Construction of locally repairable codes via elliptic curves}\label{sec:3}

\subsection{A general construction via automorphism groups}
Let $\fE/\F_q$ be an elliptic curve with the  function field $E$. Let $\Aut(E/\F_q)$ be the automorphism group of $E$ over $\F_q$.
Let $\mG$ be a subgroup of $\Aut(E/\F_q)$ of order $r+1$ and let $E^{\mG}$ be the fixed subfield of $E$ with respect to $\mG$.
Denote by $F$ the fixed subfield $E^{\mG}$. Then $E/F$ is a Galois extension with Galois group $\Gal(E/F)=\mG.$

Assume that there exist
$\ell$ rational places $Q_1,\cdots, Q_\ell$ of $F$ which are all splitting completely in $E/F$.
Let $P_{i,1}, P_{i,2},\cdots, P_{i,r+1}$ be the $r+1$ rational places of $E$ lying over $Q_i$ for each $1\le i \le \ell$.
Put $\mP=\{P_{i,j}: 1\le i \le \ell, 1\le j\le r+1\}$. Then the cardinality of $\mP$ is $\ell(r+1)$.

Choose a divisor $G$ of $F$ such that $\text{supp}(G) \cap \{Q_1,\cdots,Q_m\}=\emptyset$.
The Riemann-Roch space $\mathcal{L}(G)=\{f\in F^*: (f)\ge -G\}\cup \{0\}$
 is a finite dimensional vector space over $\F_q$ with dimension $\dim_{\F_q}(G)\ge \deg(G)-g(F)+1$, where $g(F)$ is the genus of $F$.
Let $\{z_1,\cdots, z_t\}$ be a basis of the Riemann-Roch space $\mathcal{L}(G)$ over $\F_q$.
Choose elements $w_i\in E$ such that $w_0=1,w_1,\cdots,w_{r-1}$ are linearly independent over $F$ and $\nu_{P_{i,j}}(w_l)\ge 0$ for all $1\le i\le \ell$, $1\le j\le r+1$ and $0\le l\le r-1$.
Consider the set of functions
\begin{equation}\label{eq:12} V:=\left\{\sum_{j=1}^t a_{0j}z_j+\sum_{i=1}^{r-1} \left(\sum_{j=1}^{t-1} a_{ij}z_j\right)w_i\in E:\; a_{ij}\in \F_q \right\}.\end{equation}

\begin{prop}\label{prop:3.1}
(see \cite{BHHMV16})
Let $i$ be an integer between $1$ and $\ell$, and suppose that every $r\times r$ submatrix of the matrix
$$M=\left(\begin{array}{cccc}w_0(P_{i,1}) & w_1(P_{i,1}) & \cdots & w_{r-1}(P_{i,1}) \\w_0(P_{i,2}) & w_1(P_{i,2})  & \cdots &w_{r-1}(P_{i,2})  \\\vdots & \vdots & \ddots & \vdots \\w_0(P_{i,r+1}) & w_1(P_{i,r+1})  & \cdots &w_{r-1}(P_{i,r+1})\end{array}\right)$$
is invertible. Then the value of $f\in V$ at any place in the set $\{P_{i,1}, P_{i,2},\cdots, P_{i,r+1}\}$ can be recovered from the values of $f$ at the other $r$ places.
\end{prop}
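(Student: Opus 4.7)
The plan is to exploit the $\mG$-invariance of the functions $z_j$ together with the hypothesis that $Q_i$ splits completely in $E/F$. The overall strategy is to show that, as $f$ ranges over $V$, the evaluation vector $(f(P_{i,1}),\dots,f(P_{i,r+1}))^T$ always lies in the column span of $M$; then invertibility of every $r\times r$ minor of $M$ will immediately yield a linear recovery formula using any $r$ of the $r+1$ coordinates.

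First I would unpack the value of an arbitrary $f\in V$ at $P_{i,k}$. Because $z_1,\dots,z_t\in F=E^{\mG}$ and $Q_i$ splits completely in $E/F$, the residue field of each $P_{i,k}$ agrees with that of $Q_i$, so each function $z_j$ takes the same value $\zeta_j:=z_j(Q_i)$ at every one of the places $P_{i,1},\dots,P_{i,r+1}$. Substituting these scalars into the defining expansion of $V$ and using $w_0=1$, the double sum collapses to
\[
f(P_{i,k}) \;=\; \sum_{l=0}^{r-1} c_l\, w_l(P_{i,k}), \qquad\text{where } c_0=\sum_{j=1}^{t} a_{0j}\zeta_j,\;\; c_l=\sum_{j=1}^{t-1}a_{lj}\zeta_j\;(1\le l\le r-1).
\]
Crucially, the coefficients $c_0,\dots,c_{r-1}$ depend only on $f$ and on $i$, not on $k$. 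Hence $(f(P_{i,1}),\dots,f(P_{i,r+1}))^T = M\,\mathbf{c}$ with $\mathbf{c}=(c_0,\dots,c_{r-1})^T\in\F_q^r$, exhibiting the evaluation vector as a member of the column span of $M$.

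Second, given the $r$ known values $\{f(P_{i,k})\}_{k\neq j}$, I would solve for $\mathbf{c}$ using the $r\times r$ submatrix $M^{(j)}$ of $M$ obtained by deleting row $j$: by hypothesis $M^{(j)}$ is invertible, so $\mathbf{c}$ is uniquely determined, and then $f(P_{i,j})=\sum_{l=0}^{r-1} c_l\, w_l(P_{i,j})$ is recovered. An equivalent and perhaps cleaner formulation is to note that the left kernel of $M$ is one-dimensional and spanned by a vector $\mathbf{u}=(u_1,\dots,u_{r+1})$ all of whose coordinates are nonzero: if some $u_j$ vanished, restricting the relation $\mathbf{u}^{T}M=0$ to the remaining $r$ rows would annihilate the invertible matrix $M^{(j)}$, a contradiction. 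This immediately gives the explicit recovery identity
\[
f(P_{i,j}) \;=\; -u_j^{-1}\sum_{k\neq j}u_k\, f(P_{i,k}).
\]

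The only mildly subtle point, and the one around which the whole construction is engineered, is the first step: recognizing that the $\mG$-fixed functions $z_j$ become constants along the fibre $\{P_{i,k}\}_k$ precisely because $Q_i$ splits completely in $E/F$. This is exactly why the ansatz in \eqref{eq:12} forces the $z_j$ into $E^{\mG}$ while leaving the local ``shape'' to the $w_l$. Once that observation is in hand, the recovery conclusion is a pure linear algebra fact about the column span of $M$, so there is no real obstacle beyond correctly setting up the evaluation.
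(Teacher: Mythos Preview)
Your argument is correct and is precisely the standard one: the $z_j$, being $\mG$-invariant, are constant along the fibre over $Q_i$, so the evaluation vector lies in the column span of $M$, and invertibility of every $r\times r$ submatrix then gives the linear recovery. The paper does not supply its own proof of this proposition---it simply cites \cite{BHHMV16}---so there is nothing to compare against beyond noting that what you wrote is exactly the intended mechanism.
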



\begin{prop}\label{prop:3.2}
Let $\mP$ and $V$ be defined as above and satisfy the assumption of Proposition {\rm \ref{prop:3.1}}. If $V$ is contained in $\mL(D)$ for a divisor $D$ of $E$ with $\deg(D)< \ell (r+1)$ and $\supp(D)\cap\{P_{i,1},\dots,P_{i,r+1}\}_{i=1}^{\ell}=\emptyset$, then the algebraic geometry code
$$C(\mP,V)=\{(f(P))_{P\in \mP}: f\in V\}$$ is a $q$-ary $[n,k,d]$-locally repairable code with locality $r$, length $n=\ell(r+1)$, dimension $k=rt-(r-1)$ and minimum distance $d\ge n-\deg(D)$.
\end{prop}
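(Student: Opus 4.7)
The plan is to verify each of the four assertions (length, locality, dimension, minimum distance) in turn. The length $n=\ell(r+1)$ is immediate from $|\mP|=\ell(r+1)$. Locality follows directly from Proposition~\ref{prop:3.1}: for every $P_{i,j}\in\mP$, the remaining $r$ places in its block $\{P_{i,1},\ldots,P_{i,r+1}\}$ form a recovery set, so the locality of $C(\mP,V)$ is at most $r$.

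For the dimension, I would first show that the generating set $\{z_j:1\le j\le t\}\cup\{z_jw_i:1\le i\le r-1,\;1\le j\le t-1\}$ of $V$ is $\F_q$-linearly independent in $E$, which will give $\dim_{\F_q}V=t+(r-1)(t-1)=rt-(r-1)$. Suppose a linear combination of these generators vanishes. Rewriting it as $\sum_{i=0}^{r-1}\bigl(\sum_{j}a_{ij}z_j\bigr)w_i=0$ and using the $F$-linear independence of $w_0,w_1,\ldots,w_{r-1}$, each coefficient $\sum_j a_{ij}z_j\in F$ must vanish; then the $\F_q$-linear independence of the basis $z_1,\ldots,z_t$ of $\mL(G)$ forces $a_{ij}=0$ for all $i,j$. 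Next I would show injectivity of the evaluation map $\mathrm{ev}:V\to\F_q^n$, $f\mapsto (f(P))_{P\in\mP}$. Since $V\subseteq\mL(D)$ and $\supp(D)\cap\mP=\emptyset$, the evaluation at each $P\in\mP$ is well-defined; and if $f\in V$ vanishes on every $P\in\mP$, then $f\in\mL\bigl(D-\sum_{P\in\mP}P\bigr)$, which is the zero space because $\deg\bigl(D-\sum_{P\in\mP}P\bigr)=\deg(D)-\ell(r+1)<0$. Hence the code has dimension $rt-(r-1)$ over $\F_q$.

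The minimum distance bound is the standard algebraic-geometry-code argument. For any nonzero $f\in V\subseteq\mL(D)$, let $N$ denote the number of places $P\in\mP$ at which $f(P)=0$. Then $f$ is a nonzero element of $\mL\bigl(D-\sum_{P\in\mP,\,f(P)=0}P\bigr)$, so the degree of this divisor must be nonnegative (any divisor admitting a nonzero global section has nonnegative degree, since $(f)+A\ge 0$ implies $\deg(A)\ge 0$). This yields $N\le\deg(D)$, so the Hamming weight of the codeword $(f(P))_{P\in\mP}$ is at least $n-\deg(D)$, which is the desired lower bound on $d$.

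The proof is on the whole a packaging of standard Riemann--Roch arguments together with Proposition~\ref{prop:3.1}, and no single step is a real obstacle; the one point that requires a moment of thought is the $F$-linear independence step in the dimension calculation, where one must exploit the $F$-independence of $\{w_0,\ldots,w_{r-1}\}$ and the $\F_q$-independence of $\{z_1,\ldots,z_t\}$ simultaneously to deduce $\F_q$-independence of the mixed products $z_jw_i$ that span $V$.
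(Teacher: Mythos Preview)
Your proposal is correct and follows essentially the same route as the paper's own proof: both compute $\dim_{\F_q}V=rt-(r-1)$ via the $F$-linear independence of $w_0,\ldots,w_{r-1}$ together with the $\F_q$-independence of $z_1,\ldots,z_t$, bound the minimum distance by the standard zero-counting argument for functions in $\mL(D)$, deduce injectivity of the evaluation map from $\deg(D)<n$, and invoke Proposition~\ref{prop:3.1} for the locality. Your write-up is in fact a bit more explicit than the paper's (e.g.\ you spell out the injectivity step via $\mL(D-\sum_P P)=\{0\}$), but there is no substantive difference in method.
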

\begin{proof}
First, it is easy to see that the dimension of $V$ over $\F_q$ is $rt-(r-1)$,
since $w_0,w_1,\cdots, w_{r-1}$ are linearly independent over $F$ and $\{z_1,\cdots, z_t\}$ is a basis of the Riemann-Roch space $\mathcal{L}(G)$ over $\F_q$.
Every nonzero function $f\in V\subseteq\mL(D)$ has at most $\deg(D)$ zeros among $\{P_{i,1},\dots,P_{i,r+1}\}_{1\le i\le \ell}$. Hence, the minimum distance of $C(\mP,V)$ is lower bounded by $d\ge n-\deg(D)$. Under the assumption that $\deg(D)<\ell(r+1)=n$, we have $d\ge 1$. Hence, the dimension of  $C(\mP,V)$ is $k=\dim_{\F_q}V=rt-(r-1)$.
The locality property follows from  Proposition \ref{prop:3.1}.
\end{proof}

By considering  subgroups of the automorphism group $\Aut(\fE)$, we can choose a space $V$ and some rational points such that the assumption of Proposition \ref{prop:3.1} is satisfied.

\begin{prop}\label{prop:3.3}
Let $\fE$ be an elliptic curve defined by the Weierstrass equation \mbox{\eqref{eq:3}}.
Let $E$ be the function field $\F_q(\fE)$.
Let $\mG$ be a subgroup of $\Aut(\fE/\F_q)$  with order $|\mG|=r+1=2s$ for  positive integers $s\ge 2$ and $r<q$. Furthermore, we assume that  the set $\{\Gs(x):\; \Gs\in\mG\}$ has size $s$.
Let $F=E^\mG$ be the fixed subfield of $E$ with respect to $\mG$. Then
\begin{itemize}
\item[{\rm (i)}] there exists an element  $z\in E$ satisfying that $F=\F_q(z)$ and $\deg(z)_\infty=r+1$; and elements   $w_0=1,w_1,\cdots, w_{r-1}$ of $E$ that are linearly independent over $F$;

\item[{\rm (ii)}] let  $\{P_{i,1},P_{i,2},\cdots,P_{i,r+1}\}$ be the pairwise distinct rational places lying over the same place of $F$ for each $1\le i\le \ell$, such that
$\{P_{i,1},P_{i,2},\cdots,P_{i,r+1}\}_{i=1}^l \cap \supp(D)=\emptyset.$
Then every $r\times r$ submatrix of the matrix
$$M=\left(\begin{array}{cccc}1 & w_1(P_{i,1}) & \cdots & w_{r-1}(P_{i,1}) \\1 & w_1(P_{i,2})  & \cdots &w_{r-1}(P_{i,2})  \\\vdots & \vdots & \ddots & \vdots \\1& w_1(P_{i,r+1})  & \cdots &w_{r-1}(P_{i,r+1})\end{array}\right)$$
is invertible for all $1\le i\le \ell$.
\end{itemize}
\end{prop}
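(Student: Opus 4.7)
My plan for (i) is to use that $\mG$ fixes the identity point $O$ of the elliptic curve group law, together with the ramification analysis of Lemma \ref{lem:2.4}, to produce $z$ explicitly, then select the $w_i$ from a compatible tower of subfields. Every $\Gs \in \mG \subseteq \Aut(\fE/\F_q)$ fixes $O$, so the decomposition group at $O$ over $F = E^\mG$ equals $\mG$; as both residue fields are $\F_q$, the ramification index $e_{O|Q_\infty}$ equals $|\mG| = r+1$, and Lemma \ref{lem:2.4} gives $F$ rational. Let $x_1,\ldots,x_s$ denote the distinct elements of $\{\Gs(x) : \Gs \in \mG\}$ and set $z = \prod_{i=1}^s x_i$; then $z \in F$ and $\nu_O(z) = \sum_i \nu_O(x_i) = -2s = -(r+1)$. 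Combined with $e_{O|Q_\infty} = r+1$, this yields $\nu_{Q_\infty}(z) = -1$, so $F = \F_q(z)$ and $\deg(z)_\infty = r+1$. For the $w_i$, I would take $w_0 = 1$, $w_j = x^j$ for $1 \le j \le s-1$, and $w_{s+j'} = y x^{j'}$ for $0 \le j' \le s-2$; using the tower $F \subset \F_q(x) = E^{\langle \tau \rangle} \subset E$ with $[\F_q(x):F] = s$ and $[E:\F_q(x)] = 2$ (where $\tau$ is the unique non-trivial element of the stabilizer of $x$ in $\mG$, i.e.\ the hyperelliptic involution), these $r$ elements extend to the $F$-basis $\{1,x,\ldots,x^{s-1},y,xy,\ldots,x^{s-1}y\}$ of $E$ and are therefore $F$-linearly independent.

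For (ii), I plan to reformulate the conclusion as: the left null space of $M$ is one-dimensional and generated by a vector whose coordinates are all nonzero. Index the $r+1$ places as $\{P_k, \tau P_k\}_{k=1}^s$, where $P_k = \Gs_k P$ for coset representatives $\Gs_k$ of $\langle \tau \rangle$ in $\mG$, and let $a_k = x(P_k)$, $b_k = y(P_k)$; write a putative null vector as $(\beta_k^\pm)$. The equations from $w_l = x^j$ ($0 \le j \le s-1$) read $\sum_k (\beta_k^+ + \beta_k^-) a_k^j = 0$: since $P$ has trivial stabilizer in $\mG$, two conjugates $\Gs_k P, \Gs_{k'} P$ with the same $x$-value would have to differ by $\tau$ and hence lie in the same coset, so the $a_k$'s are pairwise distinct, and the $s \times s$ Vandermonde matrix forces $\beta_k^+ + \beta_k^- = 0$. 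Substituting this and using $y(\tau P_k) = -b_k - a_1 a_k - a_3$, the equations from $w_l = x^{j'} y$ reduce to
\[
\sum_{k=1}^s \beta_k^+ (2 b_k + a_1 a_k + a_3)\, a_k^{j'} = 0, \qquad j' = 0, \ldots, s-2,
\]
a Vandermonde-cofactor system in $\mu_k := \beta_k^+(2b_k + a_1 a_k + a_3)$ whose unique (up to scalar) solution has $\mu_k = c/\prod_{l \ne k}(a_k - a_l)$, all nonzero.

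The main obstacle is closing the loop: deducing $\beta_k^+ \ne 0$ from $\mu_k \ne 0$ requires $2b_k + a_1 a_k + a_3 \ne 0$, but this factor vanishes precisely when $\tau(P_k) = P_k$, i.e.\ when $P_k$ is a $2$-torsion point of $\fE$, which is excluded by our assumption that $P_k$ and $\tau P_k$ are distinct places. Hence all $\beta_k^\pm$ are nonzero, and the left-null-space generator has the required property, which is equivalent to every $r \times r$ submatrix of $M$ being invertible. The delicate bookkeeping I anticipate is in characteristic $2$, where the symmetric/antisymmetric decomposition $\beta_k^- = -\beta_k^+$ collapses to $\beta_k^- = \beta_k^+$; however the involution rule becomes $y(\tau P_k) = b_k + a_1 a_k + a_3$, the same intermediate step yields $\sum_k \beta_k^+(a_1 a_k + a_3)\, a_k^{j'} = 0$, and $\tau P_k \ne P_k$ still forces $a_1 a_k + a_3 \ne 0$, so the conclusion is identical.
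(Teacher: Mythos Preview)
Your argument is correct but follows a genuinely different route from the paper.  For part (i) the paper chooses a rational place $P=(a,b)$ that splits completely in $E/F$ and takes $z=\prod_{i=1}^{s}\frac{1}{\Gs_i^{-1}(x)-a}$, so that the pole divisor of $z$ in $E$ is $P_1+\cdots+P_{r+1}$ with the $P_j=\Gs_j(P)$ pairwise distinct; the $w_i$ are then produced abstractly by a Riemann--Roch counting argument so that $(w_i)_\infty=P_1+\cdots+P_{i+1}$, and their $F$-linear independence is checked by a valuation comparison at $P_{s+1}$.  You instead put the pole of $z$ at the totally ramified point $O$ and write the $w_i$ completely explicitly as $1,x,\dots,x^{s-1},y,xy,\dots,x^{s-2}y$, reading off independence from the tower $F\subset\F_q(x)\subset E$.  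For part (ii) the contrast is sharper: the paper argues by contradiction, showing that a singular $r\times r$ minor would force a principal divisor of the form $P_{i,r}-P_s$ and invoking Lemma~\ref{lem:2.1}, i.e.\ the genus-one fact that distinct rational points are never linearly equivalent.  Your proof is pure linear algebra: the explicit $w_i$ turn the left null space of $M$ into a pair of Vandermonde systems in the distinct $x$-values $a_1,\dots,a_s$, and the non-vanishing of $2b_k+a_1a_k+a_3$ (equivalently $\tau P_k\neq P_k$) finishes it.  What your approach buys is explicitness and elementarity---no Riemann--Roch existence step, no appeal to the elliptic-curve Picard group---and as a bonus all poles of $z$ and of the $w_i$ sit at $O$, so in the subsequent Proposition~\ref{prop:3.4} the evaluation points need only avoid ramified places rather than an extra orbit of size $r+1$.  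What the paper's approach buys is a template that visibly generalises: the divisor-theoretic contradiction via Lemma~\ref{lem:2.1} does not depend on any Vandermonde structure and would adapt to other choices of $w_i$ with prescribed pole support.
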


\begin{proof} 
Let $\Gs_1=1, \Gs_2, \cdots, \Gs_s$ be the automorphisms of $\mG$ with the pairwise distinct $\Gs_i(x)$.
Let $\Gs_{s+j}$ be the automorphisms of $\mG$ with  $\Gs_j(x)=\Gs_{s+j}(x)$ for $1\le j\le s$.
Let $P=(a,b)$ be a rational place of $E$ such that $\Gs(P)$ are pairwise distinct for all $\Gs\in \mG$, i.e., $P\cap F$ is splitting completely in $E$.

Put $z=\prod_{i=1}^s\frac{1}{\Gs_i^{-1}(x)-a}$. Then $\Gs(z)=z\in F$ and the principal divisor of $z$ is
$$(z)=(r+1)O-P_1-P_2-\cdots-P_{r+1},$$
where $\{P_1,P_2,\dots,P_{r+1}\}=\{\Gs(P):\; \Gs\in\mG\}$.
Hence, we obtain $\deg(z)_{0}=r+1=[E:F]$ and hence $F=\F_q(z)$ by \cite[Theorem 1.4.11]{St09}.
As $\ell(P_1+P_2)=2$ from Riemann's Theorem, there exists an element $w_1\in \mL(P_1+P_2)\setminus \F_q$ such that $(w_1)_{\infty}=P_1+P_2$.
For each $2\le i\le r-1$, the set $\cup_{j=1}^{i+1}\mL(\sum_{u=1}^{i+1}P_u-P_j)$ has size at most $(i+1)q^{i}$ that is less than $q^{i+1}=|\mL(\sum_{u=1}^{i+1}P_u)|$. This implies that there exists  an element $w_i\in E$ such that $(w_i)_{\infty}=P_1+P_2+\cdots+P_{i+1}$. 
We claim that $w_0=1, w_1, \cdots, w_{r-1}$ are linearly independent over $\F_q(z)$. Suppose that $w_0, w_1, \cdots, w_{r-1}$ are linearly dependent over $\F_q(z)$, i.e., there exist functions $a_0(z),  a_1(z),\dots,a_{r-1}(z)$ that are not all zero such that $\sum_{i=0}^{r-1}a_i(z)w_i=0$. By multiplying a common nonzero polynomial in $\F_q[z]$, we may assume that every $a_i(z)$ is a polynomial of $\F_q[z]$. Let $0\le s\le r-1$ be the largest integer such that $a_s(z)$ has the largest degree, i.e., $\deg(a_s(z))=\max\{\deg(a_i(z))\}_{i=0}^{r-1}$ and $\deg(a_s(z))>\max\{\deg(a_{s+1}(z)),\dots,\deg(a_{r-1}(z))\}$, where degree of the zero polynomial is defined to be $-\infty$. Thus, for $i<s$, we have \begin{eqnarray*}\nu_{P_{s+1}}(a_i(z)w_i)&=&-\deg((a_i(z))+\nu_{P_{s+1}}(w_i)=-\deg((a_i(z))\\
&>& -\deg((a_i(z))-1\ge  -\deg((a_s(z))-1=\nu_{P_{s+1}}(a_s(z)w_s).\end{eqnarray*}
For $i>s$, we have \begin{eqnarray*}\nu_{P_{s+1}}(a_i(z)w_i)&=&-\deg((a_i(z))+\nu_{P_{s+1}}(w_i)> -\deg((a_s(z))+\nu_{P_{s+1}}(w_i) \\
&=& -\deg((a_s(z))+\nu_{P_{s+1}}(w_s)=\nu_{P_{s+1}}(a_s(z)w_s).\end{eqnarray*}
This implies that $\nu_{P_{s+1}}(-a_s(z)w_s)<\nu_{P_{s+1}}(\sum_{0\le i\le r-1,i\neq s}a_i(z)w_i)$ by the Strictly Triangle Inequality \cite[Lemma 1.1.11]{St09}.
It is a contradiction since $-a_s(z)w_s=\sum_{0\le i\le r-1,i\neq s}a_i(z)w_i$.

If the pairwise distinct places $P_{i,1},\cdots,P_{i,r+1}$ lie over the same rational place $z-\Gb_i$ of $F$ for some $\Gb_i\in\F_q$, we claim
every $r\times r$ submatrix of the matrix
$$M=\left(\begin{array}{cccc}1 & w_1(P_{i,1}) & \cdots & w_{r-1}(P_{i,1}) \\1 & w_1(P_{i,2})  & \cdots &w_{r-1}(P_{i,2})  \\\vdots & \vdots & \ddots & \vdots \\1& w_1(P_{i,r+1})  & \cdots &w_{r-1}(P_{i,r+1})\end{array}\right)$$
is invertible. Without loss of generality, we may consider the first $r$ rows. Suppose that
$$\text{det}\left(\begin{array}{cccc}1 & w_1(P_{i,1}) & \cdots & w_{r-1}(P_{i,1}) \\1 & w_1(P_{i,2})  & \cdots &w_{r-1}(P_{i,2})  \\\vdots & \vdots & \ddots & \vdots \\1& w_1(P_{i,r})  & \cdots &w_{r-1}(P_{i,r})\end{array}\right)=0.$$
Then there exists $(c_0,\cdots,c_{r-1})\in \F_q^r\setminus\{\bo\}$  such that
$$\left(\begin{array}{cccc}1 & w_1(P_{i,1}) & \cdots & w_{r-1}(P_{i,1}) \\1 & w_1(P_{i,2})  & \cdots &w_{r-1}(P_{i,2})  \\\vdots & \vdots & \ddots & \vdots \\1& w_1(P_{i,r})  & \cdots &w_{r-1}(P_{i,r})\end{array}\right)
\left(\begin{array}{c} c_0 \\  c_1\\  \vdots \\ c_{r-1}\end{array} \right)=0.$$
Then we have $(c_0+c_1w_1+\cdots+c_{r-1}w_{r-1})(P_{i,j})=0$ for all $1\le j\le r$ and hence $c_0+c_1w_1+\cdots+c_{r-1}w_{r-1}\in \mL(P_1+\cdots+P_r-P_{i,1}-\cdots-P_{i,r})$. Thus, the principal divisor of  $c_0+c_1w_1+\cdots+c_{r-1}w_{r-1}$ is
$$(c_0+c_1w_1+\cdots+c_{r-1}w_{r-1})=\sum_{j=1}^r P_{i,j}-\sum_{j=1}^r P_{j}.$$
As the places $P_{i,1},\cdots,P_{i,r}$ lie over the same rational place of $z-\Gb_i$, the $x$-coordinate of $P_{i,j}$ for $1\le j\le r$ are the roots of
$$\prod_{j=1}^s\frac{1}{\Gs_j^{-1}(x)-a}-\Gb_i=0.$$
Let $\Ga_1,\Ga_2,\cdots, \Ga_s\in\F_q$ be pairwise distinct roots of the above equation.
After rearranging order of the places $P_{i,j}$ for $1\le j\le r$, we may assume that
$$(x-\Ga_j)=P_{i,2j-1}+P_{i,2j}-2O \text{ for } 1\le j\le s-1.$$
Since we have $P_j+P_{s+j}-2O=(\Gs_j^{-1}(x)-a)$ for $1\le j\le s-1$, then
$$P_{i,r}-P_s=\Big{(}(c_0+c_1w_1+\cdots+c_{r-1}w_{r-1})\prod_{j=1}^{s-1}\frac{(\Gs_j^{-1}(x)-a)}{x-\Ga_j}\Big{)}.$$
This is a contradiction by Lemma \ref{lem:2.1}.
\end{proof}

With the above preparation, we can now state a result on locally repairable codes from elliptic curves.
\begin{prop}\label{prop:3.4}
Let $\fE$ be an elliptic curve with $N$ rational points defined by the Weierstrass equation \mbox{\eqref{eq:3}}.
Let $E$ be the function field $\F_q(\fE)$.
Let $\mG$ be a subgroup of $\Aut(\fE/\F_q)$  with order $|\mG|=r+1=2s$ for  positive integers $s\ge 2$ and $r<q$. Furthermore, we assume that  the set $\{\Gs(x):\; \Gs\in\mG\}$ has size $s$.
 Then there exists an optimal $q$-ary $[n=\ell(r+1),k=rt-r+1, d=n-(t-1)(r+1)]$ locally repairable code with locality $r$ for any $1\le \ell \le \left\lfloor\frac{N-2r-4}{r+1}\right\rfloor$ and $1\le t\le\ell$.
\end{prop}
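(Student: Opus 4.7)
The strategy is to specialise the framework of Propositions \ref{prop:3.1}--\ref{prop:3.3} to the given situation and then invoke Proposition \ref{prop:3.2} together with a careful pole analysis tight enough to match the Singleton-type bound.

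\emph{Set-up.} Apply Proposition \ref{prop:3.3} to produce $z\in E$ with $F=\F_q(z)$ and $(z)_E=(r+1)O-(P_1+\cdots+P_{r+1})$, together with $F$-linearly independent elements $w_0=1,w_1,\dots,w_{r-1}\in E$ satisfying $(w_i)_\infty=P_1+\cdots+P_{i+1}$. Let $Q_\infty$ denote the pole of $z$ in $F$ (which splits completely, with lifts $P_1,\dots,P_{r+1}$) and $Q_0$ its zero (which is totally ramified, with unique lift $O$). To select $\ell$ rational places $Q_1,\dots,Q_\ell$ of $F$ distinct from $Q_0,Q_\infty$ that split completely in $E/F$, I invoke Lemma \ref{lem:2.6}: the number of ramified rational places of $E$ other than $O$ is at most $r+2$, so at least $N-r-3$ rational places of $E$ are unramified. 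Discarding the $r+1$ lifts of $Q_\infty$ and grouping the rest into $\mG$-orbits of size $r+1$ (free action, since the inertia group at an unramified rational place is trivial) produces at least $\lfloor(N-2r-4)/(r+1)\rfloor$ admissible places of $F$. Let $\{P_{i,1},\dots,P_{i,r+1}\}$ be the lifts of $Q_i$ and $\mP=\bigcup_{i=1}^\ell\{P_{i,1},\dots,P_{i,r+1}\}$, so $|\mP|=n=\ell(r+1)$.

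\emph{The code and the pole bound.} Take $G=(t-1)Q_\infty$ in $F$, so $\mL(G)=\mathrm{span}_{\F_q}\{1,z,\dots,z^{t-1}\}$ has dimension $t$. Define $V$ via formula \eqref{eq:12} with $z_j=z^{j-1}$; the $F$-linear independence of the $w_i$ gives $\dim_{\F_q}V=t+(r-1)(t-1)=rt-r+1$, and Proposition \ref{prop:3.3}(ii) supplies the invertibility hypothesis of Proposition \ref{prop:3.1} on each fibre, establishing locality $r$. For the pole estimate, write $f=g_0(z)+\sum_{i=1}^{r-1} g_i(z)w_i\in V$ with $\deg g_0\le t-1$ and $\deg g_i\le t-2$ ($i\ge 1$); since $z$ has simple poles at $P_1,\dots,P_{r+1}$ and a zero of order $r+1$ at $O$, we obtain $\nu_{P_k}(g_0)\ge -(t-1)$ and $\nu_{P_k}(g_i)\ge -(t-2)$. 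Combined with $\nu_{P_k}(w_i)\ge -1$, this forces $\nu_{P_k}(f)\ge -(t-1)$ for every $k=1,\dots,r+1$, while $f$ is regular at $O$ and at every other place of $E$. Hence $V\subseteq\mL(D)$ with $D:=(t-1)\sum_{k=1}^{r+1}P_k$, of degree $(t-1)(r+1)<n$ (since $t\le\ell$), and $\supp(D)\cap\mP=\emptyset$.

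\emph{Conclusion.} Proposition \ref{prop:3.2} then yields a $q$-ary $[n,rt-r+1,d]$-locally repairable code with locality $r$ and $d\ge n-(t-1)(r+1)$. Since $k=r(t-1)+1$ gives $\lceil k/r\rceil=t$, the Singleton-type bound \eqref{eq:x1} reads $d\le n-(t-1)(r+1)$, hence equality holds and the code is optimal.

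\emph{Main obstacle.} The delicate step is the pole estimate. A first-attempt bound would place $V$ inside $\mL((t-1)(r+1)O+P_1+\cdots+P_r)$, a divisor of degree $(t-1)(r+1)+r$, which falls $r$ short of the Singleton-type bound and only yields $d\ge n-(t-1)(r+1)-r$. The decisive observation is that the totally ramified place $O$ lies above the \emph{zero} $Q_0$ of $z$ rather than above its pole $Q_\infty$; consequently every $g_i(z)$ is regular at $O$ and concentrates its poles at the $P_k$'s, where they combine cleanly with the simple poles of the $w_i$'s to total at most $t-1$ at each of the $r+1$ points. This is what produces the sharper divisor $D$ and ultimately the optimality of the construction.
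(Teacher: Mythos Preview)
Your argument is correct and follows essentially the same route as the paper: you invoke Proposition~\ref{prop:3.3} to obtain $z$ and the $w_i$'s, count completely splitting places via Lemma~\ref{lem:2.6}, set $z_j=z^{j-1}$ in \eqref{eq:12}, and then observe $V\subseteq\mL\bigl((t-1)\sum_{k}P_k\bigr)$ before applying Proposition~\ref{prop:3.2} and the Singleton-type bound. The paper's proof merely asserts the containment $V\subseteq\mL((t-1)(P_1+\cdots+P_{r+1}))$ without justification, whereas you spell out the pole computation at each $P_k$ explicitly; this added detail, together with your remark that the unramified rational places partition into full $\mG$-orbits (since a rational $P$ forces $f(P\mid P\cap F)=1$), makes your write-up more self-contained but not different in substance.
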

\begin{proof} Let $F=E^\mG$ be the fixed subfield of $E$ with respect to $\mG$. Let $z\in F$ and $w_0,w_1,\dots,w_{r-1}\in E$ be elements given in Proposition \ref{prop:3.3}. By Lemma \ref{lem:2.6}, there are at most $r+3$ rational places (including $O$) in total that are ramified in $E/F$. Except for all ramified places, all other rational places of $E$ are splitting completely in $E/F$. By our assumption, we can find $\ell$ sets $\{P_{i,1},\dots,P_{i,r+1}\}_{i=1}^{\ell}$ that do not intersect with ramified points and $r+1$ pole of $z$. Put $z_i=z^{i-1}$ for $i=1,2,\dots,t$ and consider the set $V$ of functions given in \eqref{eq:12}.
Then $V$ is a subspace of $\mL((t-1)(P_1+\cdots+P_{r+1}))$, where $P_1,P_2,\dots,P_{r+1}$ are $r+1$ pole places of $z$ given in the proof of Proposition \ref{prop:3.3}.
By Proposition \ref{prop:3.2},
 the algebraic geometry code
$C(\mP,V)$ is an $[n=\ell(r+1),k=rt-r+1,d\ge n-(t-1)(r+1)]$ locally repairable codes with locality $r$. On the other hand, by the Singleton-type bound \eqref{eq:x1},
\[d\le n-k- \left\lceil \frac kr\right\rceil+2=n-rt+r-1-\left\lceil \frac {tr-r+1}r\right\rceil+2=n-(t-1)(r+1).\]
This implies that $C(\mP,V)$ is optimal and the desired result follows.
\end{proof}
\begin{rmk}\label{rmk:2}
By modifying algebraic geometry codes, we can include the poles of $z$ in the set of evaluation points (see the details in Sections 2 and 3 of \cite{JMX17}). Thus, $\ell$ in Proposition \ref{prop:3.4} can take values up to $\left\lfloor\frac{N-r-3}{r+1}\right\rfloor$ instead of $\left\lfloor\frac{N-2r-4}{r+1}\right\rfloor$.
\end{rmk}
Since we consider  subgroups of $\Aut(\fE)$, the locality $r$ can only take one of values $1,2,3,5,7,11,23$. Locality $1$ is a trivial case. Let us start with locality $2$.

\subsection{Locality $r=2$}
As Proposition \ref{prop:3.4} requires that $r+1$ be an even number, i.e., $r$ is odd, we need a different construction for the case where the locality $r=2$.
\begin{prop}\label{prop:3.5}
Let $\mathfrak{E}/\F_q$ be an elliptic curve with function field $E=\F_q(\fE)$. Assume that $\Aut(\fE/\F_q)$ contains a subgroup $\mG$ of order $3$. Let $F$ be the subfield of $E$ with $\Gal(E/F)=\mG$. Assume that there are $\ell$ rational places of $F$ that split completely in $E/F$. Then for any $t$ with $0\le t<\ell$, there exists a $q$-ary $[n=3\ell,k=2t+1,d=n-3t]$ locally repairable code with locality $2$.
\end{prop}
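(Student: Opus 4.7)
The plan is to adapt the construction from Proposition~\ref{prop:3.4} to the case $r=2$, where $r+1=3$ is odd and the paired structure used in Proposition~\ref{prop:3.3} is unavailable. Since $\mG$ fixes $O$, the place $O$ is totally ramified in $E/F$, and Lemma~\ref{lem:2.4} then gives that $F$ is a rational function field. First I would pick $z\in F$ whose unique pole in $F$ is $O\cap F$ (so $(z)_\infty=3O$ in $E$ and $F=\F_q(z)$), and then, using Riemann-Roch on the elliptic curve ($\dim_{\F_q}\mL(3O)=3$, while $F\cap \mL(3O)=\F_q+\F_q z$ has $\F_q$-dimension $2$), choose $w\in \mL(3O)\setminus F$.

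Let $\mP$ be the set of $3\ell$ rational places of $E$ lying above the $\ell$ split places $Q_1,\dots,Q_\ell$ of $F$ (note $O\notin \mP$, since $O$ is ramified), and set
\[
V=\{p(z)+q(z)w : p,q\in \F_q[z],\ \deg p\le t,\ \deg q\le t-1\}.
\]
Then $\dim_{\F_q} V=2t+1$ because $w\notin F$, and $V\subseteq \mL(3tO)$ after comparing pole orders at $O$. For $t<\ell$ the evaluation map $V\to \F_q^{3\ell}$ is injective (any nonzero $f\in V$ has at most $3t<3\ell$ zeros), so the code $C=C(\mP,V)$ has dimension $k=2t+1$ and minimum distance $d\ge n-3t$. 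The Singleton-type bound~\eqref{eq:x1} gives $d\le n-(2t+1)-\lceil(2t+1)/2\rceil+2=n-3t$, so once locality $2$ is verified, the code is automatically optimal with the stated parameters.

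On each block $\{P_{i,1},P_{i,2},P_{i,3}\}$ above $Q_i$, any $f=p(z)+q(z)w\in V$ satisfies
\[
f(P_{i,j})=\alpha_i+\beta_i\, w(P_{i,j}),\qquad \alpha_i=p(z(Q_i)),\ \beta_i=q(z(Q_i)),
\]
so the three block coordinates lie in the $\F_q$-span of $(1,1,1)$ and $(w(P_{i,j}))_{j=1}^{3}$. Locality $2$, with recovery of any coordinate from the other two, is equivalent to this $2$-dimensional span being an MDS subspace of $\F_q^3$, i.e., to the three values $w(P_{i,1}), w(P_{i,2}), w(P_{i,3})$ being pairwise distinct.

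Establishing this pairwise distinctness is the main obstacle. Since $[E:F]=3$ is prime and $w\notin F$, $w$ generates $E$ over $F$ with a cubic minimal polynomial $h(T)\in F[T]$ whose roots reduced modulo $\mathfrak{m}_{Q_i}$ are precisely the values $w(P_{i,j})$; so separation amounts to $\mathrm{disc}(h)(Q_i)\neq 0$ for every split $Q_i$. To secure this I would use the explicit description of $\mG$ provided by Lemmas~\ref{lem:2.7}--\ref{lem:2.9}: in each case a Weierstrass equation for $\fE$ together with the formula for $\sigma$ lets one write $h$ down directly (for example, $h(T)=T^3+\alpha T-y^2$ in the characteristic-$3$ setting of Lemma~\ref{lem:2.8} with $z=y$ and $w=x$), and a short computation shows that the zero divisor of $\mathrm{disc}(h)\in F$ is supported only on the places of $F$ lying below the $\mG$-fixed points of $E$; these are precisely the ramified places, which cannot split. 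Hence $\mathrm{disc}(h)(Q_i)\neq 0$ for every split $Q_i$, completing the verification of locality $2$.
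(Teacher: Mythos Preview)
Your construction mirrors the paper's almost exactly: both pick $z\in F$ with $(z)_\infty=3O$ in $E$, both build $V$ from $z$-polynomials times $\{1,\text{second function}\}$, and both verify dimension and the distance bound in the same way. The one substantive difference is the choice of that second function. You take an arbitrary $w\in\mL(3O)\setminus F$; the paper takes $x\in E$ with $(x)_\infty=2O$.

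That specific pole order is the idea you are missing, and it is what makes the locality verification immediate and general. If $x(P_{i,j})=x(P_{i,j'})$ for distinct $j,j'$, then $x-x(P_{i,j})\in\mL(2O)$ has zero divisor exactly $P_{i,j}+P_{i,j'}$, so $P_{i,j}+P_{i,j'}\sim 2O$, i.e.\ $P_{i,j}\oplus P_{i,j'}=O$ by \eqref{eq:4}. Now $P_{i,j'}=\sigma(P_{i,j})$ for some $\sigma\in\mG$ of order $3$; since every element of $\Aut(\fE)$ is a group endomorphism of $(\fE,\oplus)$, applying $\sigma$ twice gives $\sigma^2(P_{i,j})=P_{i,j}$, contradicting that $Q_i$ splits completely. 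Hence the three $x$-values on each split fiber are pairwise distinct and Proposition~\ref{prop:3.1} yields locality $2$ uniformly, with no case analysis. (The paper's own proof is terse here and does not spell this step out, but the pole-order-$2$ choice is precisely what reduces it to a two-line application of Lemma~\ref{lem:2.1}.) Your $w$ in fact works too, because $\mL(3O)=\F_q+\F_q x+\F_q z$ forces $w=a+bx+cz$ with $b\neq0$, and $z$ is constant on each fiber; but that reduction to the degree-$2$ function is the real content, not the discriminant.

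Your discriminant route is valid in principle, but relying on the explicit models of Lemmas~\ref{lem:2.7}--\ref{lem:2.9} does not prove Proposition~\ref{prop:3.5} as stated: those lemmas treat particular Weierstrass equations over particular ground fields, while the proposition is formulated for an arbitrary elliptic curve with $3\mid|\Aut(\fE/\F_q)|$. Even after noting that such a curve must have $j=0$, you would still owe a justification that it admits one of those $\F_q$-models, plus a separate discriminant computation in each characteristic. All of this is bypassed by simply taking the second function to have pole divisor $2O$.
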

\begin{proof} Let $O$ be the zero element of $\fE$. Then $O$ is totally ramified in $E/F$ and hence $F$ is a rational function field by Lemma \ref{lem:2.4}. Let $O'$ be the unique place of $F$ that lies under $O$. Choose $z$ in $F$ such that $(z)_\infty=O'$ as a divisor of $F$ (this is possible as $F$ is a rational function field). Then $(z)_\infty=3O$ as a divisor of $E$.  Choose $x\in E$ such that $(x)_\infty=2O$.

Consider the  $\F_q$-space $V_t$ defined by
\begin{equation}
V_t:=\{f_0(z)+f_1(z)x:\; f_i(z)\in\F_q[z] \text{ for } i=0,1;\;  \deg(f_0(z))\le t;\; \deg(f_1(z))\le t-1\}.
\end{equation}
It is clear that $V_t$ is a subspace of the Riemann-Roch space $\mL(3tO)$. We claim that $\dim_{\F_q}V_t=2t+1$. Suppose that $f_0(z)+f_1(z)x$ is the zero function and one of $f_0(z)$ and $f_1(z)$ is not the zero polynomial. Then both polynomials must be nonzero. Hence $f_0(z)=-f_1(z)x$ and $-3\deg(f_0)=\nu_{O}(f_0)=\nu_O(-f_1(z)x)=-3\deg(f_1)-2$. This is a contradiction and it implies that $\dim_{\F_q}V_t=2t+1$.

Let  $\{P_{i1},P_{i2},P_{i3}\}_{i=1}^{\ell}$ be $n$ rational places of $E$ such that, for each $i$,  the three places $P_{i1},P_{i2},P_{i3}$ lie over the same rational place of $F$.
Define the algebraic geometry code
\begin{equation} C_t:=\{(f(P_{i1}),f(P_{i2}),f(P_{i3}))_{i=1}^{\ell}:\; f\in V_t\}
\end{equation}
Then the dimension of the code $C_t$ is $k=2t+1$ and the minimum distance $d$ is at least $n-3t>0$ as $f\in \mL(3tO)$.
The Singleton-type bound \eqref{eq:x1} of locally repairable codes shows that $d\le n-k-\left\lceil \frac k2\right\rceil+2=n-(2t+1)-(t+1)+2=n-3t$.
Hence, the code $C_t$ is an optimal $q$-ary $[3\ell, 2t+1,3\ell-3t]$-locally repairable code with locality $2$.
\end{proof}

{\bf Proof of Theorem \ref{thm:1.1}:}
\begin{proof} If $p=2$, then by Lemma \ref{lem:2.13}, there exists a maximal elliptic curve $\fE$ defined by (i) $y^2+y=x^3$ for $a\equiv 2\pmod{4}$;  or (ii) $y^2+y=x^3+\Gg$ for $a\equiv 0\pmod{4}$ and some $\Gg\in\F_q\setminus\{\Ga^2+\Ga:\; \Ga\in\F_q\}$  with $|\Aut(\fE/\F_q)|=24$. Furthermore, the automorphism $\Gs_{u,0,0}$ given by $\Gs_{u,0,0}(x)=u^2x$ and $\Gs_{u,0,0}(y)= u^3y$ for a $3$rd primitive root $u$ of unity in $\F_q$ generates a cyclic group $\mG$ of order $3$. Let  $F=E^{\mG}$, where $E=\F_q(\fE)$ is the function field of $\fE$. The zero point $O$ is a totally ramified point with respect to the extension $E/F$.

In addition, for $a\equiv 2\pmod{4}$, the point $(0,0)$ and $(0,1)$ are also  totally ramified since $\Gs_{u,0,0}(0,0)=(0,0)$ and $\Gs_{u,0,0}(0,1)=(0,1)$. Thus, the rest of $q+2\sqrt{q}-2$ rational points split completely. By Proposition \ref{prop:3.5}, there exists an optimal $q$-ary $[n=3\ell, k=2t+1,d=n-3t]$ locally repairable code with locality $2$ for any $0\le t<\ell\le \frac{q+2\sqrt{q}-2}3$. Note that in this case we have  $\frac{q+2\sqrt{q}-2}3=\left\lfloor\frac{q+2\sqrt{q}}3\right\rfloor$.

For $a\equiv 0\pmod{4}$, the curve is given by $y^2+y=x^3+\Gg$. In this case, the point $(0,\Ga)$ and $(0,\Gb)$ are   totally ramified for some $\Ga,\Gb\in\bar{\F}_{q}\setminus\F_q$. Thus, there are $q+2\sqrt{q}$ rational points that split completely in $E/F$. By Proposition \ref{prop:3.5}, there exists an optimal $q$-ary $[n=3\ell, k=2t+1,d=n-3t]$ locally repairable code with locality $2$ for any $1\le t<\ell\le \frac{q+2\sqrt{q}}3$. Note that in this case we have  $\frac{q+2\sqrt{q}}3=\left\lfloor\frac{q+2\sqrt{q}}3\right\rfloor$.

If $p=3$, then by Lemma \ref{lem:2.14}, there exists a maximal curve elliptic curve $\fE$ defined by  $y^2=x^3+\theta^2 x$ for some $\theta\in\F_q^*$  with $|\Aut(\fE/\F_q)|=12$. In particular, $\mG:=\{\Gs_{1,s}:\; s\in\F_q,\; s^3+\theta^2 s=0\}$ is a group of order $3$, where $\Gs_{1,s}$ given by $\Gs_{1,s}(x)=x+s$ and $\Gs_{1,s}(y)= y$. Let  $F=E^{\mG}$, where $E=\F_q(\fE)$ is the function field of $\fE$. Then the zero point $O$ is the unique totally ramified point with respect to the extension $E/F$. Thus, the rest of $q+2\sqrt{q}$ splits completely.  By Proposition \ref{prop:3.5}, there exists an optimal $q$-ary $[n=3\ell, k=2t+1,d=n-3t]$ locally repairable code with locality $2$ for any $0\le t<\ell\le \frac{q+2\sqrt{q}}3$. Note that in this case we have  $\frac{q+2\sqrt{q}}3=\left\lfloor\frac{q+2\sqrt{q}}3\right\rfloor$.

Finally, assume that  $p\neq 2,3$. By Lemma \ref{lem:2.15}, the elliptic curve $\fE/\F_q$ defined by $y^2=x^3+\theta^2$ for some $\theta\in\F_q^*$ is maximal with $|\Aut(\fE/\F_q)|=6$. Then the automorphism $\Gs_{u}$ given by $\Gs_{u}(x)=u^2x$ and $\Gs_u(y)= u^3y=y$ for a $3$rd primitive root $u$ of unity in $\F_q$ generates a cyclic group $\mG$ of order $3$. Let  $F=E^{\mG}$, where $E=\F_q(\fE)$ is the function field of $\fE$. The zero point $O$ is a totally ramified point with respect to the extension $E/F$.
In addition, the point $(0,\theta)$ and $(0,-\theta)$ are also  totally ramified since $\Gs_{u}(0,\theta)=(0,\theta)$ and $\Gs_{u}(0,-\theta)=(0,-\theta)$. Thus, the rest of $q+2\sqrt{q}-2$ rational points split completely. By Proposition \ref{prop:3.5}, there exists an optimal $q$-ary $[n=3\ell, k=2t+1,d=n-3t]$ locally repairable code with locality $2$ for any $0\le t<\ell\le \frac{q+2\sqrt{q}-2}3$. Note that in this case we have  $\frac{q+2\sqrt{q}-2}3=\left\lfloor\frac{q+2\sqrt{q}}3\right\rfloor$. This completes the proof.
\end{proof}

\begin{ex}{\rm
Let $q=4$ and $\F_{4}=\F_2(\Ga)$ with $\Ga^2+\Ga+1=0$.
We give an explicit construction of an optimal $4$-ary [6,3,3]-\LRC with locality $2$.
Let $E$ be the rational function field $\F_{4}(x,y)$ with $y^2+y=x^3$ and let $F=\F_4(y)$. Then $(y)_{\infty}=3O$ and $(x)_{\infty}=2O$.
It is easy to verify that $P_1=(1,\Ga)$, $P_2=(\Ga,\Ga)$, $P_3=(\Ga+1,\Ga)$ are lying over the same rational place $P_{y-\Ga}$ of $F$, and $P_4=(1,\Ga+1)$, $P_5=(\Ga,\Ga+1)$, $P_6=(\Ga+1,\Ga+1)$ are lying over the same rational place $P_{y-\Ga-1}$ of $F$.
Put \[V:=\left\{a_0+a_1y+b_0x:\; a_{0}, a_1, b_0\in\F_{4}\right\}.\]
Then the algebraic geometry code
$$
\begin{array}{lll}
C(\mP,V)=\left\{ \left(f(P_1), f(P_2), f(P_3),f(P_4),f(P_5), f(P_6)\right) :\; f\in V\right\}.
\end{array}
$$
is an optimal $4$-ary  $[6,3,3]$-\LRC with locality $2$.  Furthermore, a generator matrix of this code is computed as follows:
$$\left(\begin{array}{cccccc}
1 & 1 & 1 & 1 & 1 & 1  \\
\Ga & \Ga & \Ga & \Ga+ 1 &  \Ga+ 1  &  \Ga+ 1   \\
1 & \Ga & \Ga+1 & 1 & \Ga & \Ga+1  \\
\end{array}\right).$$
}\end{ex}

\begin{ex}{\rm Let $q=64$. By Theorem \ref{thm:1.1},  for any integers $t$ and $\ell$ with $0\le t< \ell \le 26$, there exists an optimal $64$-ary $[3\ell,2t+1,3\ell-3t]$-locally repairable code with locality $2$. In particular, for each integer $t$ with $1\le t \le 25$, there exists an optimal $64$-ary $[78,2t+1,78-3t]$-locally repairable code with locality $2$.
}\end{ex}

\begin{ex}{\rm Let $q=81$. By Theorem \ref{thm:1.1},  for any integers $t$ and $\ell$ with $0\le t< \ell \le 33$, there exists an optimal $81$-ary $[3\ell,2t+1,3\ell-3t]$-locally repairable code with locality $2$. In particular, for each integer $t$ with $1\le t \le 32$, there exists an optimal $64$-ary $[99,2t+1,99-3t]$-locally repairable code with locality $2$.
}\end{ex}

\begin{ex}{\rm Let $q=25$. By Theorem \ref{thm:1.1},  for any integers $t$ and $\ell$ with $0\le t< \ell \le 11$, there exists an optimal $25$-ary $[3\ell,2t+1,3\ell-3t]$-locally repairable code with locality $2$. In particular, for each integer $t$ with $1\le t \le 10$, there exists an optimal $25$-ary $[33,2t+1,33-3t]$-locally repairable code with locality $2$.
}\end{ex}

Now, we make use of Proposition \ref{prop:3.4} to get locally repairable codes of an odd locality $r$.
\subsection{Locality $r=3,5,7,11$ or $23$} The main purpose of this subsection is to prove Theorem \ref{thm:1.2}.

{\bf Proof of Theorem \ref{thm:1.2}:}
\begin{proof} We prove part (i) only. The similar arguments can be applied to other cases. By Proposition \ref{prop:3.4} and Remark \ref{rmk:2}, it is sufficient to show that there exists a maximal elliptic curve $\fE/\F_q$ defined by the Weierstrass equation \mbox{\eqref{eq:3}} and  a subgroup $\mG$ of $\Aut(\fE/\F_q)$ such that $|\mG|=4$ and the cardinality of the set $\{\Gs(x):\; \Gs\in\mG\}$ is $2$.

If $p=2$, by  Lemma \ref{lem:2.13}, there exists a maximal elliptic curve $\fE/\F_q$  defined by $y^2+y=x^3+\Ga$ for some $\Ga\in\F_q$ with $|\Aut(\fE/\F_q)|=24$. Then $\mG:=\{\Gs_{1,s,t}:\; s\in\F_2,\; \; t\in\F_4,\; t^2+t=s\}$ is a subgroup of $\Aut(\fE/\F_q)$ of order $4$, where $\Gs_{1,s,t}(x)=x+s$ and $\Gs_{1,s,t}(y)=y+s x+t$. Thus,  $\{\Gs(x):\; \Gs\in \mG\}=\{x,x+1\}$ has size $2$.

If  $p\equiv  3\pmod{4}$, by  Lemmas \ref{lem:2.14}, there exists a maximal elliptic curve $\fE/\F_q$  defined by $y^2+y=x^3+\Ga x$ for some $\Ga\in\F_q$ with $|\Aut(\fE/\F_q)|$ divisible by $4$. Then $\mG:=\langle\Gs_{u}\rangle$  is a subgroup of $\Aut(\fE/\F_q)$ of order $4$, where $\Gs_{u}(x)=u^2x$ and $\Gs_{u}(y)=u^4y$ and $u$ is a $4$th primitive root of unity in $\F_q$. Thus,  $\{\Gs(x):\; \Gs\in \mG\}=\{x,u^2x\}$ has size $2$.
\end{proof}

\begin{ex}{\rm Let $q=64$. Consider the elliptic curve over $\F_{64}$ defined by the equation $y^2+y=x^3$. It is maximal with $81$ rational points over $\F_{64}$. Consider the sugroup $\mG$ of $\Aut(\fE/\F_{64})$ given by $\mG:=\{\Gs_{1,s,t}:\; s\in\F_2,\; \; t\in\F_4,\; t^2+t=s\}$. Then $O$ is the unique ramified point and all other $80$ rational points split completely. These $80$ points can be taken as evaluation points and we get
an optimal $64$-ary $[3\ell,3t+2,3\ell-4(t-1)]$-locally repairable code with locality $3$ for all $1\le t<\ell\le 20$. In particular, for each integer $t$ with $1\le t \le 20$, there exists an optimal $64$-ary $[80,3t+2,80-4(t-1)]$-locally repairable code with locality $3$. For this example, $\ell$ can achieve $20$, while in Theorem \ref{thm:1.2}, $\ell$ is upper bounded by $\left\lfloor\frac{q+2\sqrt{q}-r-2}{r}\right\rfloor=18$.
}\end{ex}


\begin{thebibliography}{10}

\bibitem{BTV17} A. Barg, I. Tamo, and S. Vl\u{a}du\c{t}, {\it Locally recoverable codes on algebraic curves}, IEEE Trans. Inform. Theory {\bf 63}(8)(2017), 4928--4939.

\bibitem{BHHMV16} A. Barg, K. Haymaker, E. Howe, G. Matthews, and A. V¨¢rilly-Alvarado,
Locally recoverable codes from algebraic curves and surfaces, in Algebraic Geometry for Coding Theory and Cryptography, E.W. Howe, K.E. Lauter, and J.L. Walker, Editors, Springer, 2017, pp. 95--126.

\bibitem{FY14} M. Forbes and S. Yekhanin, {\it On the locality of codeword symbols in non-linear codes,} Discrete Mathematics
{\bf 324}(6)(2014), 78--84.

\bibitem{GHSY12} P. Gopalan, C. Huang, H. Simitci and S. Yekhanin, {\it On the locality of codeword symbols}, IEEE Trans. Inf. Theory {\bf 58}(11)(2012), 6925--6934.

\bibitem{HL07} J. Han and L. A. Lastras-Montano, {\it Reliable memories with subline accesses}, Proc. IEEE Internat. Sympos. Inform. Theory, 2007,  2531--2535.

\bibitem{HCL} C. Huang, M. Chen, and J. Li, {\it Pyramid codes: Flexible schemes to trade space for access efficiency in reliable data storage systems,} Sixth IEEE International Symposium on Network Computing and Applications, 2007, 79--86.

\bibitem{JMX17} Lingfei Jin, Liming Ma and Chaoping Xing, {\it Construction of optimal locally repairable codes via automorphism groups of rational function fields}, https://arxiv.org/abs/1710.09638.


\bibitem{NX01} H. Niederreiter and C.P. Xing,  {\it Rational Points on Curves over Finite Fields: Theory and Applications},
 LMS {\bf 285}, Cambridge, 2001.

\bibitem{PD14} D. S. Papailiopoulos and A.G. Dimakis, {\it Locally repairable codes}, IEEE Trans. Inf. Theory {\bf 60}(10)(2014), 5843--5855.

\bibitem{PKLK12} N. Prakash, G.M. Kamath, V. Lalitha and P.V. Kumar, {\it Optimal linear codes with a local-error-correction property}, Proc. 2012 IEEE Int. Symp. Inform. Theory, 2012, 2776--2780.

\bibitem{SRKV13} N. Silberstein, A.S. Rawat, O.O. Koyluoglu and S. Vichwanath, {\it Optimal locally repairable codes via rank-matric codes}, Proc. IEEE Int. Symp. Inf. Theory,  2013, 1819--1823.


\bibitem{Si86} J.H. Silverman, {\it The Arithmetic of Elliptic Curves}, Graduate Texts in Mathematics {\bf 106}, Springer Verlag, 2009.

\bibitem{St09}  H. Stichtenoth, {\it Algebraic Function Fields and Codes}, Graduate Texts in Mathematics {\bf 254}, Springer Verlag, 2009.

\bibitem{TB14} I. Tamo and A. Barg, {\it A family of optimal locally recoverable codes}, IEEE Trans. Inform. Theory {\bf 60}(8)(2014), 4661--4676.


\bibitem{TPD16} I. Tamo, D.S. Papailiopoulos and A.G. Dimakis, {\it Optimal locally repairable codes and connections to matroid theory},  IEEE Trans. Inform. Theory {\bf 62}(12)(2016), 6661--6671.

\bibitem{TV91} M. A. Tsfasman and S.G. Vl\u{a}du\c{t}, {\it Algebraic-Geometric Codes}, Dordrecht, The Netherlands: Kluwer, 1991.

\bibitem{TVN90} M. A. Tsfasman, S.G. Vl\u{a}du\c{t} and D. Nogin,  {\it Algebraic Geometric Codes: Basis Notions}, American Mathematical Soc., 1990.

\bibitem{Wa69} W.C. Waterhouse, {\it Abelian varieties over finite fields}, Annales scientifiques de l'\'{E}.N.S. $4^e$ s\'erie, tome 2, $no.$ 4(1969),
521--560.
\end{thebibliography}
\end{document}